\documentclass[twocolumn]{autart}   

\usepackage{graphicx}  
\makeatletter
\let\c@author\relax
\makeatother
\usepackage{amsmath}
\usepackage{amssymb}
\usepackage{mathtools}
\usepackage{amsfonts}

\usepackage{physics}
\usepackage{bm}
\usepackage{xcolor}
\usepackage{wrapfig}
\usepackage[hidelinks]{hyperref}

\usepackage[backend=biber,style=authoryear,sorting=nyt,dashed=false, url=false, isbn=false, maxbibnames=5,maxcitenames=5, natbib=true]{biblatex}
\DeclareDelimFormat{finalnamedelim}{\ifnumgreater{\value{liststop}}{2}{,\space} \space\&\space}

\begin{document}

\begin{frontmatter}

\title{Estimation Problems and the Modulating Function Method:\\ The Algebra of Modulating Functions}
\thanks[footnoteinfo]{The material in this paper was not presented at any conferences.}

\author[SDU]{Davi G. Accioli}\ead{davi@sdu.dk},
\author[SDU]{Jerome Jouffroy}\ead{jerome@sdu.dk} 

\address[SDU]{SDU Mechatronics, University of Southern Denmark (SDU), Alsion 2, 6400, S{\o}nderborg, Denmark}

\begin{keyword}
abstract algebra; modulating function method; parameter estimation; system identification. 
\end{keyword}

\begin{abstract}                          
State and parameter estimation, along with fault detection, are three crucial estimation problems within the control systems community. Although different approaches have been proposed for each type of problem, the modulating function method proposes a more unified approach to all three problem classes, being used for state and parameter estimation of lumped systems, fault detection, and estimation of distributed and fractional systems. At the core of the method is the modulating function: a function that evaluates to 0 at the left or right boundaries up to a certain order of derivatives. By selecting the modulating functions, one directly determines the filter characteristics, and, for that reason, different function families have been proposed over the years. Nevertheless, many families of modulating functions are given in a rather similar mathematical structure. In light of these structures, this paper formally discusses the algebraic properties of modulating functions, and, after formalizing the closedness and group properties of modulating functions, a simple algorithm to construct new modulating functions is proposed, discussed, and illustrated with the construction of the newly introduced logarithmic modulating function families and 3 non-analytic modulating function families. Moreover, the fact that total modulating functions form a vector space and an algebra is exploited to construct orthonormal modulating functions, which are then used for the parameter estimation of a boat's roll dynamics, effectively avoiding matrix inversion issues.
\end{abstract}

\end{frontmatter}

\section{Introduction}\label{sec: introduction}

Estimation problems, such as parameter estimation, state estimation, and fault detection, are ubiquitous within control theory. Many different frameworks have been proposed to deal with each of these three classes of problems individually, but the modulating function method (MFM) is an approach that can simply and elegantly deal with all three of them. 

Initially introduced by Shinbrot in 1954 \citep{Shin54}, the MFM is an integral transform method that converts differential equations into algebraic ones. It is a useful approach for various lumped system estimation problems, such as state \citep{LiuKPG14} and parameter estimation \citep{IonesiRJ19}, as well as simultaneous state and parameter estimation \citep{JouffroyR15} and fault detection \citep{Fischer21}, being also relevant for the PDE and fractional-order sections of the control community \citep{Fischer_flatness_based_PDE_2020, Folke_PDE_Advection_Diffusion_Reaction, Liu_Fractional_Order, Kirati_Fractional_PDE18}.

The modulating function method's essence, however, lies within the modulating function (MF): a user-defined function that satisfies certain boundary conditions. Precisely due to the selected set of boundary conditions, it is then possible to simplify the resulting equation using integration by parts: a process that can be elegantly formulated by defining a modulation operator, such as in \cite{Ungarala_MFM_Bioprocesses}. In addition, the modulating function plays a crucial role in the estimation process, as it defines the resulting filter characteristics \citep{MaletinskySplineFrequencyDomain, PreisigSplines1}.

More recent contributions often design the modulating function as a solution of an auxiliary system, an approach initially proposed in \cite{SchmidR11} (see \cite{Noack_History_Auxiliary_Systems_MFM} for a comprehensive timeline). In doing so, the resulting estimator can be simplified, even leading to well-known observer structures being directly generated from the modulating function method \citep{D_Acc_MFM_to_Obs}. Despite this elegant approach, directly selecting modulating functions remains one of the most popular MF design methods within the literature \citep{JouffroyR15, Noack_Polynomial_Estimation, IonesiRJ19}.

Throughout the years, several MF families have been introduced, e.g., the sine family \citep{Shin54}, polynomial family \citep{LoebCahen65}, spline family \citep{MaletinskySplineFrequencyDomain}, complex Fourier family \citep{Pearson_Complex_Fourier_MF}, Hartley family \citep{Patra_Hartley_MF}, and the different hyperbolic families \citep{Acc_Hyperbolic_MF}. However, many of the modulating function families display a similar structure: a product of a term that satisfies the left boundary conditions with a term that satisfies the right boundary conditions, having a possible weight function associated with the result. Even though this notion is not new in the modulating function community, it has not been formally discussed yet.

In particular, it is crucial to understand how modulating functions are formulated, since it is often necessary to have them linearly independent from each other in parameter and state estimation problems \citep{IonesiRJ19}. Furthermore, whenever these functions become fully orthogonal to the measured signal, sharp peaks tend to occur within the estimation process (see, e.g., \cite{Co_Ungarala_Batch_Recursive_Parameter_Estimation_MFM}; \cite{IonesiRJ19}; and \cite{Li_Pin_Kernel_Based_Simulatenous_Parameter_State_Estimation}). Thus, if it is possible to guarantee at least one of these conditions by construction, the applicability of this method for higher-order systems can be significantly improved.

Moreover, by having the modulating functions pre-selected and fixed, a simple time-invariant FIR implementation of the estimation algorithm is obtained, relying solely on the storage of the sampled signals and MFs used in the numerical integration procedure. In contrast, despite being mathematically elegant and providing deep theoretical connections, auxiliary systems may increase the computational load due to the modulating functions being obtained online and recalculated at every time step, as in \cite{SchmidR11}.

To fill this knowledge gap, this paper discusses and formally defines the algebraic properties of modulating functions. In doing so, a simple algorithm to construct a modulating function from \textit{any} sufficiently smooth function is obtained. In particular, there are six main contributions to this paper:
\begin{itemize}
    \item Formalizing algebraic properties of the sets of modulating functions;
    \item Discussing, formalizing, and illustrating the use of scalar multiplication and addition to construct new families of modulations;
    \item Proposing a simple algorithm to obtain \textit{any} type of modulating function of \textit{any} order from \textit{any} sufficiently smooth function;
    \item Proposing 4 new families of modulating functions, with 3 of them being non-analytic functions;
    \item Exploiting the newly introduced TMF vector space to construct orthonormal total modulating functions;
    \item Illustrate how orthonormal total modulating functions can be used for parameter estimation of a boat's roll dynamics while avoiding matrix inversion issues.
\end{itemize} 

After this introduction, Section \ref{sec: preliminaries} familiarizes the reader with the modulating function framework and its applications, in particular for parameter estimation, together with a brief overview of some modulating function families proposed over the years. Then, Section \ref{sec: building blocks} discusses the closedness and group properties of modulating functions, culminating in an algorithm to generate MFs from any sufficiently smooth function. Next, the construction of MF families is illustrated in Section \ref{sec: constructing new MFs} through four new families of modulations, along with the construction of orthonormal TMFs. Then, an application example is shown in Section \ref{sec: application example} and, lastly, some concluding remarks are made in Section \ref{sec: conclusion}.
\section{Preliminaries}\label{sec: preliminaries}

In this section, modulating functions, their main properties, and their application to parameter estimation problems are briefly introduced, together with some known families of modulating functions. Since Hermite functions \citep{Takaya_Hermite_MF} and the Poisson Moment \citep{Saha_Poisson_Moment} \textit{tend} to $0$ at the boundary but are \textit{not} $0$, they do not satisfy the upcoming definitions for finite time and are not further discussed.

\subsection{The Modulating Function Method}

Modulating functions are the core of the modulating function method, and their defining property is that the function itself evaluates to $0$ on the left or right boundary up to a certain differential order. 

\begin{defn}\label{def: modulating function}
    Given a sufficiently smooth function $\varphi: \mathbb{R}[0, T] \mapsto \mathbb{R}$, it is called a modulating function of order $q \in \mathbb{N}_{>0}$ if it satisfies \citep{JouffroyR15}
    \begin{equation}
        \varphi^{(i)}(0) \cdot \varphi^{(i)}(T)=0
    \end{equation}
    for $i\in \mathbb{N}[0, q-1]$ and $T \in \mathbb{R}_{>0}$. Moreover,
    \begin{itemize}
        \item if $\varphi^{(i)}(0)=\varphi^{(i)}(T)=0~\forall i$, then it is called a total modulating function (TMF), denoted $\varphi_{\mathrm{T}}$;
        \item if $\varphi^{(i)}(0) = 0~\forall i$ and $ \exists i \text{ s.t. }\varphi^{(i)}(T)\neq 0$, then it is called a left modulating function (LMF), denoted $\varphi_{\mathrm{L}}$;
        \item if $\varphi^{(i)}(T)=0 ~\forall i$ and $\exists i \text{ s.t. }\varphi^{(i)}(0)\neq 0$, then it is called a right modulating function (RMF), denoted $\varphi_{\mathrm{R}}$.
    \end{itemize}
\end{defn}

Modulating functions are often employed within integral operators, and by exploiting concepts from linear operator theory, it is possible to define the modulation operator, which also has a concise notation.

\begin{defn} \label{def: modulation operator}
    Let $\varphi: \mathbb{R}[0, T] \mapsto \mathbb{R}$ be a modulating function of sufficient order $q\geq i \in \mathbb{N}$. Then, a modulation operator is given by \citep{Fischer21}
    \begin{equation}\label{eq: modulation operator def}
        \mathcal{M}^i[\mathrm{y}(t)](t_0, t_1):= \int_{t_0}^{t_1} (-1)^i \varphi^{(i)}(\tau-t_0) \mathrm{y}(\tau) \dd \tau \text{,}
    \end{equation}
    for user-defined $t_0, t_1 \in \mathbb{R}$, $t_1>t_0$, and $T:=t_1-t_0$.
\end{defn}

\begin{rem}
    The dependency of the modulation operator on $t_0$ and $t_1$ is henceforth omitted to avoid clutter, along with the introduction of the notation $\mathcal{M}[\mathrm{y}]:={\mathcal{M}}^0[\mathrm{y}]$. Moreover, modulation operators that utilize left, right, and total modulating functions are respectively denoted as $\mathcal{M}_{\mathrm{L}}$, $\mathcal{M}_{\mathrm{R}}$, and $\mathcal{M}_{\mathrm{T}}$, which may also be numbered, e.g. $\prescript{}{1}{\mathcal{M}}$, if different modulating functions are used.
\end{rem}

The main application of the modulation operator within the MFM framework is to transfer derivatives of measured signals to the modulating function. In doing so, one effectively avoids differentiation of noisy signals.

\begin{lem}\label{lemma: transfer of derivatives}
    Let $\varphi \in \mathcal{C}^i([0, T])$. Then, applying modulation operator \eqref{eq: modulation operator def} to the derivative of a signal transfers the derivative to the modulating function, i.e.
    \begin{equation}\label{eq: transfer of derivatives}
        {\mathcal{M}} [\mathrm{y}^{(i)}]={\mathcal{M}}^i [\mathrm{y}]+P \text{,}
    \end{equation}
    where
    \begin{equation}\label{eq: transfer of derivatives boundary conditions}
        P:=\sum_{j=0}^{i-1} \begin{array}{l}
             (-1)^j \varphi^{(j)}(T)\mathrm{y}^{(i-1-j)}(t_1) \\
             - (-1)^j\varphi^{(j)}(0)\mathrm{y}^{(i-1-j)}(t_0)
        \end{array} 
    \end{equation}
\end{lem}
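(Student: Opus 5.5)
The plan is to integrate by parts $i$ times, moving one derivative at a time from $\mathrm{y}$ onto $\varphi$. The boundary terms produced along the way will assemble into $P$.

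By Definition \ref{def: modulation operator}, $\mathcal{M}[\mathrm{y}^{(i)}]=\int_{t_0}^{t_1}\varphi(\tau-t_0)\mathrm{y}^{(i)}(\tau)\dd\tau$. I assume $\mathrm{y}$ is smooth enough that $\mathrm{y}^{(i-1)}$ is absolutely continuous on $[t_0,t_1]$ (e.g. $\mathrm{y}\in\mathcal{C}^i$). Together with $\varphi\in\mathcal{C}^i([0,T])$, this justifies every integration by parts below.

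The cleanest route is induction on the statement that, for $k\in\{0,\dots,i\}$,
\begin{equation*}
\int_{t_0}^{t_1}\varphi(\tau-t_0)\mathrm{y}^{(i)}(\tau)\dd\tau=\int_{t_0}^{t_1}(-1)^k\varphi^{(k)}(\tau-t_0)\mathrm{y}^{(i-k)}(\tau)\dd\tau+\sum_{j=0}^{k-1}(-1)^j\Big[\varphi^{(j)}(\tau-t_0)\mathrm{y}^{(i-1-j)}(\tau)\Big]_{t_0}^{t_1}.
\end{equation*}
For $k=0$ the sum is empty and the identity is trivial.

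For the inductive step from $k$ to $k+1\le i$, integrate the remaining integral by parts once, taking $u=(-1)^k\varphi^{(k)}(\tau-t_0)$ and $\dd v=\mathrm{y}^{(i-k)}\dd\tau$. This gives
\begin{equation*}
(-1)^k\Big[\varphi^{(k)}(\tau-t_0)\mathrm{y}^{(i-k-1)}(\tau)\Big]_{t_0}^{t_1}+\int_{t_0}^{t_1}(-1)^{k+1}\varphi^{(k+1)}(\tau-t_0)\mathrm{y}^{(i-k-1)}(\tau)\dd\tau .
\end{equation*}
The bracketed term is exactly the $j=k$ summand, so the identity holds for $k+1$.

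Taking $k=i$, the integral is $\mathcal{M}^i[\mathrm{y}]$ by \eqref{eq: modulation operator def}. Evaluating the brackets with $t_1-t_0=T$ gives $\varphi^{(j)}(T)\mathrm{y}^{(i-1-j)}(t_1)-\varphi^{(j)}(0)\mathrm{y}^{(i-1-j)}(t_0)$, which reproduces $P$ in \eqref{eq: transfer of derivatives boundary conditions}.

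There is no real obstacle here. The only points needing care are the sign bookkeeping, where the factor $(-1)^j$ is attached to the $j$-th boundary term, and the regularity assumptions on $\mathrm{y}$ that make the integrations by parts valid. It is also worth remarking that $P$ vanishes for a total modulating function of order $q\ge i$, and only partially for LMFs and RMFs. This is the source of the method's usefulness, though it is not needed for the identity itself.
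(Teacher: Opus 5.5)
Your proof is correct and takes the same approach as the paper, which says to expand the definition of the modulation operator and integrate by parts successively. Your induction on $k$ simply makes that bookkeeping explicit: the $(-1)^j$ signs and the boundary terms assembling into $P$. It also states the regularity assumption on $\mathrm{y}$ that the paper leaves implicit.
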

\begin{pf}
    Start by applying modulation operator \eqref{eq: modulation operator def} to the $i$-th derivative of a signal $\mathrm{y}(t)$, denoted $\mathrm{y}^{(i)}(t)$. Then, expand the definition of the modulation operator and successively perform integration by parts, leading to \eqref{eq: transfer of derivatives}-\eqref{eq: transfer of derivatives boundary conditions}. \hfill $\blacksquare$
\end{pf}

By using the fact that modulation operator \eqref{eq: modulation operator def} is a linear operator, problems involving linear systems can be elegantly reformulated. In particular, consider a SISO LTI system in input-output form given as
\begin{equation}
    \mathrm{y}^{(n)} + \sum_{i=0}^{n-1} a_{i}\mathrm{y}^{(i)}=\sum_{i=0}^{n-1} b_{i}\mathrm{u}^{(i)}\text{,}
    \label{eq: LTI siso input-output form}
\end{equation}
where $a_i, ~b_i \in \mathbb{R}$.

For parameter estimation problems, one can easily use TMFs to form a system of linear equations and obtain the parameter estimate in fixed time \citep{Shin54, JouffroyR15}. If it is desired to estimate states or output derivatives instead, left and right MFs can be employed, also leading to an estimate in fixed time \citep{LiuKPG14, IonesiRJ19, Noack_Lagrangian_MF_Estimation}. In this paper, the focus is on parameter estimation using total modulating functions, which is a more challenging problem than state or derivative estimation \citep{KorderNR22}.

\begin{thm}\label{theo: parameter estimation}
    Given a system described by \eqref{eq: LTI siso input-output form}, modulation operator \eqref{eq: modulation operator def}, and linearly independent modulating functions not fully orthogonal to the measured signals $\mathrm{y}(t) \in \mathbb{R}$ and $\mathrm{u}(t) \in \mathbb{R}$, the system parameters can be estimated as
    \begin{equation}\label{eq: parameter estimation}
        \hat{\bm{\theta}} =\mathbf{W}^{-1} \mathbf{z}\text{,}
    \end{equation}
    where
    \begin{gather}
        \hat{\bm{\theta}}:=\left[ \hat{a}_0, \hat{a}_1, ~\ldots~, \hat{a}_{n-1}, \hat{b}_0, ~\ldots~, \hat{b}_{n-1}\right]^\top \text{,} \\
        \mathbf{z}:=\left[\prescript{}{1}{\mathcal{M}_\mathrm{T}^{n}[\mathrm{y}]},  ~\ldots~, \prescript{}{2n}{\mathcal{M}_\mathrm{T}^{n}[\mathrm{y}]}  \right]^\top \text{,} \label{eq: definition z parameter estimation}
    \end{gather}
    and 
    \begin{equation}\label{eq: definition W parameter estimation}
        \mathbf{W}:= \begin{bmatrix} -\mathbf{W}_y & \mathbf{W}_u\end{bmatrix}\text{,}
    \end{equation}
    for
    \begin{equation}\label{eq: definition Wy parameter estimation}
        \mathbf{W}_y:=\begin{bmatrix} \prescript{}{1}{\mathcal{M}_\mathrm{T}[\mathrm{y}]} & \cdots & \prescript{}{1}{\mathcal{M}_\mathrm{T}^{n-1}[\mathrm{y}]} \\
        \prescript{}{2}{\mathcal{M}_\mathrm{T}[\mathrm{y}]} & \cdots & \prescript{}{2}{\mathcal{M}_\mathrm{T}^{n-1}[\mathrm{y}]} \\
        \vdots & \ddots & \vdots \\
        \prescript{}{2n}{\mathcal{M}_\mathrm{T}[\mathrm{y}]} & \cdots & \prescript{}{2n}{\mathcal{M}_\mathrm{T}^{n-1}[\mathrm{y}]} \end{bmatrix}
    \end{equation}
    and
    \begin{equation}\label{eq: definition Wu parameter estimation}
        \mathbf{W}_u:=\begin{bmatrix} \prescript{}{1}{\mathcal{M}_\mathrm{T}[\mathrm{u}]} & \cdots & \prescript{}{1}{\mathcal{M}_\mathrm{T}^{n-1}[\mathrm{u}]} \\
        \prescript{}{2}{\mathcal{M}_\mathrm{T}[\mathrm{u}]} & \cdots & \prescript{}{2}{\mathcal{M}_\mathrm{T}^{n-1}[\mathrm{u}]} \\
        \vdots & \ddots & \vdots \\
        \prescript{}{2n}{\mathcal{M}_\mathrm{T}[\mathrm{u}]} & \cdots & \prescript{}{2n}{\mathcal{M}_\mathrm{T}^{n-1}[\mathrm{u}]} \end{bmatrix}\text{.}
    \end{equation}
\end{thm}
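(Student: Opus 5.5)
We apply $2n$ modulation operators $\prescript{}{k}{\mathcal{M}_\mathrm{T}}$, $k=1,\ldots,2n$, each built from a total modulating function $\prescript{}{k}{\varphi}_\mathrm{T}$ of order at least $n$, to both sides of \eqref{eq: LTI siso input-output form}. We then use Lemma \ref{lemma: transfer of derivatives} to move every derivative of $\mathrm{y}$ and $\mathrm{u}$ onto the modulating functions. This yields $2n$ scalar linear equations in the unknown parameters, which we stack into the form $\mathbf{W}\bm{\theta}=\mathbf{z}$. Invertibility of $\mathbf{W}$ is justified by the stated hypotheses of linear independence and non-orthogonality.

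Fix $k$ and apply $\prescript{}{k}{\mathcal{M}_\mathrm{T}}$ to \eqref{eq: LTI siso input-output form}. By linearity of the integral in \eqref{eq: modulation operator def},
\[
\prescript{}{k}{\mathcal{M}_\mathrm{T}}[\mathrm{y}^{(n)}]+\sum_{i=0}^{n-1}a_i\,\prescript{}{k}{\mathcal{M}_\mathrm{T}}[\mathrm{y}^{(i)}]=\sum_{i=0}^{n-1}b_i\,\prescript{}{k}{\mathcal{M}_\mathrm{T}}[\mathrm{u}^{(i)}].
\]
Lemma \ref{lemma: transfer of derivatives} gives $\mathcal{M}[\mathrm{y}^{(i)}]=\mathcal{M}^i[\mathrm{y}]+P$. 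Since $\varphi_\mathrm{T}^{(j)}(0)=\varphi_\mathrm{T}^{(j)}(T)=0$ for all $j\le n-1$ (Definition \ref{def: modulating function}), every term in \eqref{eq: transfer of derivatives boundary conditions} vanishes, so $P=0$ for every $i\le n$. The same holds with $\mathrm{u}$ in place of $\mathrm{y}$. This is the step where total (rather than left or right) modulating functions are essential: the unknown initial and final values of $\mathrm{y},\mathrm{u}$ and their derivatives disappear. Rearranging gives
\[
-\sum_{i=0}^{n-1}a_i\,\prescript{}{k}{\mathcal{M}_\mathrm{T}^{i}}[\mathrm{y}]+\sum_{i=0}^{n-1}b_i\,\prescript{}{k}{\mathcal{M}_\mathrm{T}^{i}}[\mathrm{u}]=\prescript{}{k}{\mathcal{M}_\mathrm{T}^{n}}[\mathrm{y}].
\]
The sign convention $(-1)^i$ in \eqref{eq: modulation operator def} is absorbed into the operator, so no extra signs appear. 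Row $k$ of this system is exactly the $k$-th row of $[-\mathbf{W}_y~\mathbf{W}_u]\bm{\theta}=\mathbf{z}$, with $\mathbf{z}$ as in \eqref{eq: definition z parameter estimation}. The columns of $\mathbf{W}_y$ and $\mathbf{W}_u$ are indexed by the derivative orders $i=0,\ldots,n-1$, with $\mathcal{M}=\mathcal{M}^0$. Stacking $k=1,\ldots,2n$ gives a square $2n\times 2n$ system.

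The main obstacle is showing that $\mathbf{W}$ is invertible, so that $\hat{\bm{\theta}}=\mathbf{W}^{-1}\mathbf{z}$ recovers $\bm{\theta}$. Linear independence of the $\prescript{}{k}{\varphi}_\mathrm{T}$ alone does not force the rows of $\mathbf{W}$ to be independent. That also requires the signals to be sufficiently informative, which is what "not fully orthogonal to the measured signals" is meant to encode. I would therefore state this as the standing hypothesis. Concretely, I would argue that if $\mathbf{c}^\top\mathbf{W}=0$ for some $\mathbf{c}\neq 0$, then $\psi:=\sum_k c_k\,\prescript{}{k}{\varphi}_\mathrm{T}$ is again a TMF, by linearity of the boundary conditions. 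This $\psi$ is nonzero by linear independence, and by the construction of $\mathbf{W}$ it annihilates all $\mathrm{y}^{(i)}$ and $\mathrm{u}^{(i)}$ for $i<n$. In other words, $\psi$ is fully orthogonal to the measured signals, which the hypothesis excludes. Hence $\det\mathbf{W}\neq 0$ and \eqref{eq: parameter estimation} follows. In the noise-free case the estimate equals the true parameters.
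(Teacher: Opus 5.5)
Your proposal is correct and follows the paper's proof: you apply $2n$ total modulation operators, use Lemma~\ref{lemma: transfer of derivatives} with $P=0$, stack the resulting rows into $\mathbf{z}=\mathbf{W}\bm{\theta}$, and invert. Your left-null-vector argument for $\det\mathbf{W}\neq 0$ is more explicit than the paper's bare assertion, and you correctly observe that it needs the non-orthogonality hypothesis to hold for every nonzero combination $\sum_k c_k\,\prescript{}{k}{\varphi}_\mathrm{T}$ against all of $\mathrm{y}^{(i)},\mathrm{u}^{(i)}$, $i<n$; the paper's proof tacitly assumes this stronger reading too.
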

\begin{pf}
    Start by applying a total modulation operator \eqref{eq: modulation operator def} to \eqref{eq: LTI siso input-output form}, using Lemma \ref{lemma: transfer of derivatives}, and reorganizing to obtain
    \begin{equation}
        \mathcal{M}^n_\mathrm{T}[\mathrm{y}(t)]= \sum_{i=0}^{n-1} -a_i\mathcal{M}^i_\mathrm{T}[\mathrm{y}] +b_i\mathcal{M}^i_\mathrm{T}[\mathrm{u}]\text{,}
    \end{equation}
    where the boundary condition terms from \eqref{eq: transfer of derivatives} are all $0$ due to TMFs being used.\\
    Next, repeat the process for a total of $2n$ equations and rewrite the summation as a matrix multiplication to obtain
    \begin{equation*}
        \mathbf{z}=\mathbf{W}\bm{\theta} \text{,}
    \end{equation*}
    where $\mathbf{W}$ is given by \eqref{eq: definition W parameter estimation}-\eqref{eq: definition Wu parameter estimation}, $\mathbf{z}$ is given by \eqref{eq: definition z parameter estimation}, and $\bm{\theta}:=\left[a_0, a_1, \ldots, a_{n-1}, b_0, b_1, \ldots, b_{n-1} \right]^\top$. \\
    Lastly, if the modulating functions are linearly independent and not fully orthogonal to the measured signal, then the resulting modulations are linearly independent and $\det \mathbf{W}\neq 0$, leading to \eqref{eq: parameter estimation} as the unique solution. $\hfill \blacksquare$
\end{pf}

Even though this method converges at fixed-time $T$ \citep{KorderNR22, Byrski03} and is robust to noise \citep{Noack_Polynomial_Estimation}, it is crucial to have the modulating functions to be linearly independent, otherwise the matrix $\mathbf{W}$ becomes non-invertible, and the parameters cannot be estimated.
One approach to this problem is to obtain the modulating function itself as the solution to an auxiliary system, an idea originally proposed in \cite{SchmidR11} (see also \cite{Noack_History_Auxiliary_Systems_MFM}), but many contributions simply select a function from a previously known repertoire of modulating functions, often lacking a formal guarantee of linear independence. To increase this repertoire, many families of MFs have been proposed, each of them with their particular properties.

\begin{rem}
    It is important to note that the following discussion is not an exhaustive list, and the modulating functions explicitly given here were selected mainly based on their algebraic structure and formulation.
\end{rem}

\subsection{Brief Overview of Known Modulating Functions}

The first family of modulating functions to be proposed was the sine-type TMF family, given by
\begin{equation}\label{eq: sine TMF family}
    \varphi(\tau)=\sin^q\left( \frac{q \pi}{T} \tau \right)\text{,}
\end{equation}
where $q \in \mathbb{N}_{>0}$ is the order of the modulating function and $T \in \mathbb{R}_{>0}$ is the integration window. This family of modulating functions was introduced by Shinbrot together with the notion of the modulating function method in \cite{Shin54}.

About 10 years later, the polynomial family of modulating functions was introduced by \cite{LoebCahen65}, expressed as
\begin{equation}\label{eq: polynomial MF family}
    \varphi(\tau)=F(\tau)\tau^{q_1}(\tau-T)^{q_2}\text{,}
\end{equation}
where the tuning parameters $q_1, q_2 \in \mathbb{N}_{>0}$ dictate the order of the modulating function and $F: \mathbb{R} \mapsto \mathbb{R}$ is a user-defined sufficiently smooth weight function. Unlike the sine family, polynomial MFs can generate left, right, and total modulating functions, meaning that they can be used for different types of estimation problems. For that reason, they are commonly found in the literature, as seen in \cite{JouffroyR15}; \cite{IonesiRJ19}; and \cite{Noack_Polynomial_Estimation}.

Then, after the spline family of modulations introduced by \cite{MaletinskySplineFrequencyDomain}, the complex Fourier family of TMFs was introduced in \cite{Pearson_Complex_Fourier_MF}, being given as
\begin{equation}\label{eq: complex-Fourier TMF}
    \varphi(\tau)=\frac{1}{T}e^{- i m\omega_0\tau}\left(e^{-i\omega \tau} -1\right)^q\text{,}
\end{equation}
and being based on the Fourier transform and frequency domain properties, where $i:=\sqrt{-1}$ is the imaginary unit, $m \in \mathbb{Z}$, and $\omega_0=2\pi/T$. Soon after, the Hartley family of total modulating functions was introduced in \cite{Patra_Hartley_MF}, also focusing on the frequency domain characteristics of the resulting filter.

\begin{figure}[t]
    \centering
    \includegraphics[width=0.8\linewidth]{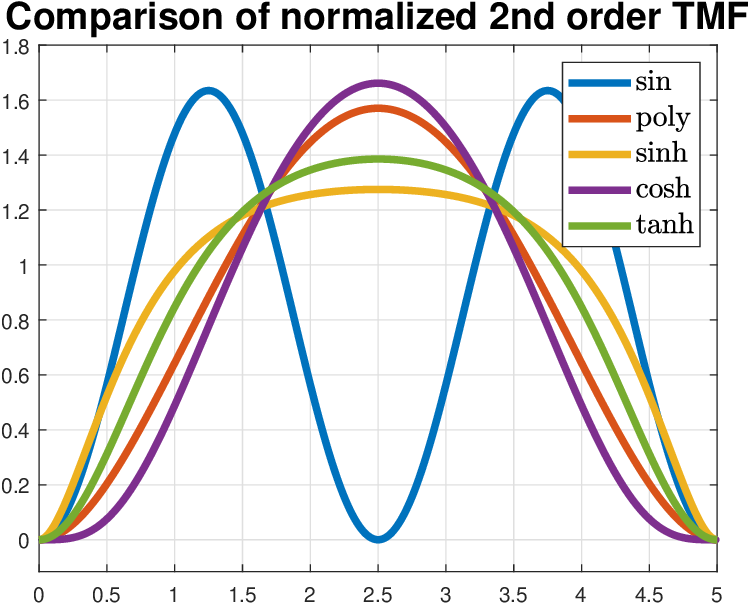}
    \caption{Illustration of total modulating functions of order $2$ for $F(\tau)=1$ from different families. A power normalization has been performed for easier visualization, as in \cite{Acc_Hyperbolic_MF} and \cite{Asiri_Selection_MF}.}
    \label{fig: illustration MF families}
\end{figure}

More recently, two more LMFs were introduced in \cite{Pin_exponential_MF}, namely the exponential left modulating function 
\begin{equation}\label{eq: Pin and Parisini exp RMF}
    \varphi(\tau)=(1-e^{-c\tau})^{q}
\end{equation}
and a time-monomial modulating function.

Lastly, a total of 16 families of hyperbolic modulating functions were introduced by \cite{Acc_Hyperbolic_MF}, with the 4 main families being the hyperbolic sine family
\begin{equation}\label{eq: sinh family of modulations}
    \varphi(\tau)= F(\tau)\sinh^{q_1}(c_1 \tau) \sinh^{q_2}(c_2(T-\tau))\text{;}
\end{equation}
the hyperbolic cosine family
\begin{equation}\label{eq: cosh family of modulations}
    \varphi(\tau)=F(\tau)(\cosh(c_1 \tau)-1)^{q_1} (\cosh(c_2(T-\tau))-1)^{q_2}\text{;}
\end{equation}
the hyperbolic tangent family
\begin{equation}\label{eq: tanh family of modulations}
    \varphi(\tau)= F(\tau)\tanh^{q_1}(c_1 \tau) \tanh^{q_2}(c_2(T-\tau))\text{;}
\end{equation}
and the hyperbolic secant family
\begin{equation}\label{eq: sech family of modulations}
    \varphi(\tau)=F(\tau)(\sech(c_1 \tau)-1)^{q_1} (\sech(c_2(T-\tau))-1)^{q_2}\text{;}
\end{equation}
all of which generate left, right, and total modulating functions with tuning parameters $c_1, c_2 \in \mathbb{R}_{\neq 0}$, along with their other formulations. To illustrate some of these different families, a visual comparison of modulating functions is shown in Figure \ref{fig: illustration MF families}.

Another topic discussed in \cite{Acc_Hyperbolic_MF} is that it is possible to use different hyperbolic functions to satisfy the left and right boundary conditions, leading to the other 12 MF families of mixed hyperbolic modulations. The boundary conditions are satisfied through the product of two different functions, which, individually, satisfy either the left or right boundary condition. Formally, this can be discussed through the algebraic properties of the sets of modulating functions.

\section{The Algebra of Modulating Functions}
\label{sec: building blocks}

One important property that is not discussed within the modulating function community is that new modulating functions can be obtained by adding and multiplying previously known MFs. This can be formalized in terms of the sets of modulating functions being closed under certain operations, and, consequently, providing a guaranteed path to obtain new modulating functions or manipulate and tune existing ones.

\begin{thm}\label{theo: closedness MFs}
    Let $\Phi_{\mathrm{L}}$, $\Phi_{\mathrm{R}}$, and $\Phi_{\mathrm{T}}$ respectively be the sets of all left, right, and total modulating functions, and $\Phi$ be the set of modulating functions, i.e. $\Phi=\{\Phi_{\mathrm{L}}, \Phi_{\mathrm{R}}, \Phi_{\mathrm{T}} \}$. Then, it follows that
    \begin{itemize}
        \item $\Phi_{\mathrm{L}}$, $\Phi_{\mathrm{R}}$, $\Phi_{\mathrm{T}}$, and $\Phi$ form abelian semigroups under scalar multiplication;
        \item $\Phi_{\mathrm{L}}$ and $\Phi_{\mathrm{R}}$ are closed under scalar multiplication with any sufficiently smooth nonzero function $F(\tau)$;
        \item $\Phi_{\mathrm{T}}$ and $\Phi$ are closed under scalar multiplication with any sufficiently smooth function $F(\tau)$;
        \item $\Phi_{\mathrm{T}}$ forms an abelian group under scalar addition, a vector space, and an associative and commutative algebra.
    \end{itemize}
\end{thm}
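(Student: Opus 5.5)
The whole argument rests on one tool: the general Leibniz rule
\begin{equation*}
(\varphi\psi)^{(i)}(\tau)=\sum_{k=0}^{i}\binom{i}{k}\varphi^{(k)}(\tau)\psi^{(i-k)}(\tau),
\end{equation*}
evaluated at the boundary points $\tau=0$ and $\tau=T$. Here ``scalar multiplication'' means the pointwise product $(\varphi\psi)(\tau)=\varphi(\tau)\psi(\tau)$, and ``scalar addition'' means the pointwise sum.

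\textbf{Products of modulating functions.} Let $\varphi,\psi$ be LMFs of orders $q_1,q_2$, so $\varphi^{(k)}(0)=0$ for $k<q_1$ and $\psi^{(k)}(0)=0$ for $k<q_2$. For $i<\max(q_1,q_2)$, every term in the Leibniz sum contains either $\varphi^{(k)}(0)$ with $k\le i<q_1$ or $\psi^{(i-k)}(0)$ with $i-k<q_2$. Hence $(\varphi\psi)^{(i)}(0)=0$, and the product has left order at least $\max(q_1,q_2)$. In fact it vanishes up to order $q_1+q_2-1$, since each term needs $k\ge q_1$ and $i-k\ge q_2$ to survive.

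The same argument at $\tau=T$ covers RMFs and TMFs. For $\Phi$ one argues per index: if $\varphi^{(i)}(0)\varphi^{(i)}(T)=0$ for each $i$, we need the analogous property for the product. For $\Phi$ as the union of the three classes, one checks that the product inherits whichever boundary each factor satisfies.

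Associativity and commutativity are inherited from the pointwise product of real functions, which gives the abelian semigroup claims. One caveat: the product of an LMF with an RMF is a TMF. So ``closed'' should be read with the order taken as the minimum of the two, and $\Phi_{\mathrm L}$ closure concerns LMF$\times$LMF. One should also check that the product does not become total, since that would make it leave $\Phi_{\mathrm L}$ in the strict sense. I would handle this by noting that the defining condition is boundary vanishing, and treat TMFs as a special case when needed.

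\textbf{Multiplication by a smooth weight $F$.} For a smooth $F$, $(F\varphi)^{(i)}(0)=\sum_k\binom{i}{k}F^{(i-k)}(0)\varphi^{(k)}(0)=0$ for $i<q$. So the vanishing conditions are preserved, which gives the closure of $\Phi_{\mathrm T}$ and $\Phi$ (with $F\equiv 0$ giving the zero function, which is trivially total).

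For $\Phi_{\mathrm L}$ and $\Phi_{\mathrm R}$, we must also keep the nonvanishing at the other endpoint. This is why $F$ must be nonzero: if $F$, together with its derivatives at $T$, annihilated the nonzero boundary data of $\varphi$, the result would fall into $\Phi_{\mathrm T}$. I expect this to be the main obstacle. ``Nonzero'' has to be interpreted as $F(T)\neq0$ (respectively $F(0)\neq0$). Then I take the smallest $j$ with $\varphi^{(j)}(T)\neq0$; in $(F\varphi)^{(j)}(T)$ all terms with $k<j$ vanish, leaving $F(T)\varphi^{(j)}(T)\neq0$.

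\textbf{Additive and vector space structure of $\Phi_{\mathrm T}$.} Differentiation is linear, so $(\alpha\varphi+\beta\psi)^{(i)}$ vanishes at both endpoints for $i<\min(q_1,q_2)$. This gives closure under addition and real scaling. The zero function is a TMF of every order and serves as the identity, $-\varphi$ is the inverse, and addition is commutative and associative pointwise; hence $\Phi_{\mathrm T}$ is an abelian group.

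The vector space axioms are inherited from the space of smooth functions on $[0,T]$, since $\Phi_{\mathrm T}$ is a subspace (for fixed order, or taking the order as the minimum). Bilinearity of the pointwise product together with its closure from the first step makes $\Phi_{\mathrm T}$ an associative, commutative algebra. Additive closure fails for $\Phi_{\mathrm L}$ and $\Phi_{\mathrm R}$ only through the strictness clause (e.g.\ $\varphi-\varphi=0$), which explains why the group and vector space claims are restricted to $\Phi_{\mathrm T}$.
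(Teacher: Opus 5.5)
Your Leibniz-rule argument for the vanishing conditions is fine, as are your treatment of $\Phi_{\mathrm T}$ (group, vector space, algebra) and your reading of $\Phi$ as the union of the three classes; these match the paper's proof. The genuine gap is in the semigroup claim for $\Phi_{\mathrm L}$ and $\Phi_{\mathrm R}$. By Definition~\ref{def: modulating function}, an LMF must have some derivative that is nonzero at $\tau=T$. Closure of $\Phi_{\mathrm L}$ therefore requires showing that a product of two LMFs does not become total. You identify exactly this check and then do not carry it out. Instead you propose to read the defining condition as mere left-boundary vanishing and to ``treat TMFs as a special case''. That changes the set rather than proving the claim. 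What you actually establish is closure of the left-vanishing functions, a set that contains $\Phi_{\mathrm T}$, not closure of $\Phi_{\mathrm L}$. The paper singles out precisely this step (``it is necessary to prove that the right boundary is non-zero for some $k$'').

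The missing step is short, and you already use a special case of it in your weight-function paragraph. Let $j_1$ and $j_2$ be the smallest indices with $\varphi^{(j_1)}(T)\neq0$ and $\psi^{(j_2)}(T)\neq0$. In the Leibniz expansion of $(\varphi\psi)^{(j_1+j_2)}(T)$, every term with $k<j_1$ contains $\varphi^{(k)}(T)=0$, and every term with $k>j_1$ contains $\psi^{(j_1+j_2-k)}(T)=0$. Hence
\[
(\varphi\psi)^{(j_1+j_2)}(T)=\binom{j_1+j_2}{j_1}\varphi^{(j_1)}(T)\,\psi^{(j_2)}(T)\neq0,
\]
and the product is a genuine LMF; the RMF case is symmetric at $\tau=0$. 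Choosing minimal indices is what rules out cancellation among nonzero terms, a point the paper's wording (``at least one non-zero product is obtained'') leaves implicit.

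The same argument lets you weaken your hypothesis $F(T)\neq0$ to the condition the paper uses later, namely that $F$ has some nonzero derivative at $T$. With $m$ minimal such that $F^{(m)}(T)\neq0$, you get $(F\varphi)^{(j+m)}(T)\neq0$. Your caution about the word ``nonzero'' is justified. An $F$ that is not identically zero but has all derivatives vanishing at $T$ turns any LMF into a TMF.
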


\begin{pf}
    To avoid cluttering, the time dependency of the modulating functions is often omitted. \\
    Given any two LMFs $\prescript{}{1}{\varphi_{\mathrm{L}}}(\tau) \in \mathcal{C}^{q}([0, T])$ and $\prescript{}{2}{\varphi_{\mathrm{L}}}(\tau)\in \mathcal{C}^{q}([0, T])$ for $q:=\max(q_1, q_2)$, their product and the derivatives of their product can be evaluated using the product rule as
    \begin{equation}
        \left( \prescript{}{1}{\varphi_{\mathrm{L}}}~\prescript{}{2}{\varphi_{\mathrm{L}}} \right)^{(k)}=\sum_{i=0}^{k} \begin{pmatrix} k \\ i \end{pmatrix} \prescript{}{1}{\varphi_{\mathrm{L}}}^{(k-i)} \prescript{}{2}{\varphi_{\mathrm{L}}}^{(i)}\text{,}
    \end{equation}
    having $k \in \mathbb{N}[0, q_1+q_2-1]$. Evaluating the product at the left boundary, it satisfies the left boundary condition
    \begin{equation}
        \left( \prescript{}{1}{\varphi_{\mathrm{L}}}~\prescript{}{2}{\varphi_{\mathrm{L}}} \right)^{(k)}(0)=\sum_{i=0}^{k} \begin{pmatrix} k \\ i \end{pmatrix} \prescript{}{1}{\varphi_{\mathrm{L}}}^{(k-i)}(0) \prescript{}{2}{\varphi_{\mathrm{L}}}^{(i)}(0) =0
    \end{equation}
    by definition, as every combination of derivative $\prescript{}{1}{\varphi_{\mathrm{L}}}^{(k-i)}(0)$ of order $k-i\geq q_1$ is paired with a derivative $\prescript{}{2}{\varphi_{\mathrm{L}}}^{(i)}(0)$ where $i<q_2$.
    Next, it is necessary to prove that the right boundary is non-zero for some $k$. To do so, evaluate the product at $T$, i.e.
    \begin{equation}
        \left( \prescript{}{1}{\varphi_{\mathrm{L}}}~\prescript{}{1}{\varphi_{\mathrm{L}}} \right)^{(k)}(T)=\sum_{i=0}^{k} \begin{pmatrix} k \\ i \end{pmatrix} \prescript{}{1}{\varphi_{\mathrm{L}}}^{(k-i)}(T) \prescript{}{2}{\varphi_{\mathrm{L}}}^{(i)}(T)\text{,}
    \end{equation}
    and note that all index combinations are uniquely present in the derivatives from $0$ to $q_1+q_2-1$. Since at least one derivative of $\prescript{}{1}{\varphi_{\mathrm{L}}}$ and $\prescript{}{2}{\varphi_{\mathrm{L}}}$ is non-zero at the right boundary, at least one non-zero product is obtained at the right boundary. Thus, the product of two LMFs is an LMF, i.e. $\prescript{}{1}{\varphi_{\mathrm{L}}} \prescript{}{2}{\varphi_{\mathrm{L}}}=\prescript{}{3}{\varphi_{\mathrm{L}}}$, and the set $\Phi_{\mathrm{L}}$ is closed under the associative operation of scalar multiplication. Thus, it is an abelian semigroup under scalar multiplication. The same argument applies to the set $\Phi_{\mathrm{R}}$ and for multiplication with sufficiently smooth nonzero functions $F(\tau)$. For TMFs, it suffices to show that, by definition,
    \begin{equation}
        \prescript{}{1}{\varphi_{\mathrm{T}}}(0) \prescript{}{1}{\varphi_{\mathrm{T}}}(0)=\prescript{}{1}{\varphi_{\mathrm{T}}}(T) \prescript{}{1}{\varphi_{\mathrm{T}}}(T)=0 \cdot 0 =0 \text{.}
    \end{equation}
    Thus, the product of any two TMFs is a TMF of order at least $1$, but higher order terms can be discussed with the same arguments as for LMFs and RMFs, with the exception that now the function $F(\tau)$ does not have to be nonzero. In particular, the zero function trivially satisfies the boundary conditions of an infinite-order TMF, a fact that will be used later.
    
    Next, consider the product of an LMF with an RMF, evaluate it at the boundaries, and note that, again by definition,
    \begin{gather}
        \varphi_{\mathrm{L}}(0)  \varphi_{\mathrm{R}}(0) = 0 \cdot  \varphi_{\mathrm{R}}(0)=0 \\
        \varphi_{\mathrm{L}}(T)  \varphi_{\mathrm{R}}(T) = \varphi_{\mathrm{L}}(T) \cdot 0=0\text{.}
    \end{gather}
    Thus, the product of a left and a right modulating function is, at least, a total modulating function of order $1$. Then, consider the product of a TMF with an LMF or RMF and evaluate it at the boundaries to obtain, once again by definition,
    \begin{gather}
        \varphi_{\mathrm{T}}(0)  \varphi_{\mathrm{L}}(0) = 0 \cdot 0=0 \\
        \varphi_{\mathrm{T}}(T)  \varphi_{\mathrm{L}}(T) = 0 \cdot \varphi_{\mathrm{L}}(T)=0 \\
        \varphi_{\mathrm{T}}(0)  \varphi_{\mathrm{R}}(0) = 0 \cdot \varphi_{\mathrm{R}}(0)=0 \\
        \varphi_{\mathrm{T}}(T)  \varphi_{\mathrm{R}}(T) = 0 \cdot 0=0\text{.}
    \end{gather}
    Next, use the fact that scalar multiplication is associative as well as commutative and that the product of a sufficiently smooth function $F(\tau)$ with any modulating function cannot remove existing zero boundary conditions, by the same arguments as above, to conclude that the set of modulating functions $\Phi$ is closed under multiplication, forming an abelian, i.e. commutative, semigroup.

    Lastly, note that the set of TMFs $\Phi_\mathrm{T}$ includes the trivial solution $\varphi_{\mathrm{T}}(\tau)=0$ and is closed under addition since
    \begin{gather}
        \prescript{}{1}{\varphi_{\mathrm{T}}}(0)+\prescript{}{2}{\varphi_{\mathrm{T}}}(0)=0+0=0\\
        \prescript{}{1}{\varphi_{\mathrm{T}}}(T)+\prescript{}{1}{\varphi_{\mathrm{T}}}(T)=0+0=0 \text{.}
    \end{gather}
    Moreover, given any TMF with an arbitrary sufficiently smooth weight function $F_1(\tau)$, the additive inverse is uniquely obtained by selecting $F_2(\tau)=-F_1(\tau)$. Since $\Phi_\mathrm{T}$ has an additive identity, is closed under the associative and commutative operator of scalar addition, and every TMF has an additive inverse, $\Phi_\mathrm{T}$ forms a commutative, i.e. abelian, group under scalar addition. Lastly, since $\Phi_\mathrm{T}$ forms an abelian semigroup under scalar multiplication and also forms an abelian group under addition, it suffices to note that scalar multiplication distributes under the associative addition operation to conclude that $\Phi_\mathrm{T}$ also forms a vector space and an associative and commutative algebra. \hfill $\blacksquare$
\end{pf}

\begin{prop}\label{proposition: sum of LMFs and RMFs}
    Given any two modulating functions, the following properties are satisfied:
    \begin{itemize}
        \item the sum of LMFs is a modulating function;
        \item the sum of RMFs is a modulating function;
        \item the sum of a TMF and an LMF is an MF;
        \item the sum of a TMF and an RMF is an MF.
    \end{itemize}
\end{prop}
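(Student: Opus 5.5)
The plan is to prove all four items directly from Definition~\ref{def: modulating function}, using only linearity of differentiation and evaluation at a point. The defining condition of a modulating function of order $q$ is the product condition $\varphi^{(i)}(0)\,\varphi^{(i)}(T)=0$ for $i\in\mathbb{N}[0,q-1]$. So for a sum $\psi:=\prescript{}{1}{\varphi}+\prescript{}{2}{\varphi}$ it suffices to find some order $q\geq 1$ and show that, for each $i<q$, either $\psi^{(i)}(0)=0$ or $\psi^{(i)}(T)=0$. Since $\psi^{(i)}=\prescript{}{1}{\varphi}^{(i)}+\prescript{}{2}{\varphi}^{(i)}$, every boundary value of $\psi$ is the sum of the corresponding boundary values of the summands. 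I will take $q:=\min(q_1,q_2)$, so that both summands satisfy their boundary conditions for every $i<q$.

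\textbf{Two LMFs.} For $i<q$ we have $\prescript{}{1}{\varphi_{\mathrm{L}}}^{(i)}(0)=\prescript{}{2}{\varphi_{\mathrm{L}}}^{(i)}(0)=0$, hence $\psi^{(i)}(0)=0$ and the product condition holds. So $\psi$ is a modulating function of order at least $q$. The same argument at $\tau=T$ handles two RMFs.

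\textbf{A TMF plus an LMF (or an RMF).} For $i<q$ the TMF vanishes at both ends. Adding an LMF leaves $\psi^{(i)}(0)=0+0=0$, so the sum satisfies the left boundary conditions up to order $q$. Symmetrically, adding an RMF gives $\psi^{(i)}(T)=0$.

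The main subtlety is that the proposition only claims the result is ``a modulating function'', not which type. The right-boundary derivatives of two LMFs may cancel, for example $\prescript{}{2}{\varphi_{\mathrm{L}}}=-\prescript{}{1}{\varphi_{\mathrm{L}}}$. In that case $\psi$ degenerates to a TMF, possibly the zero function, which Theorem~\ref{theo: closedness MFs} already notes is an infinite-order TMF. I will therefore conclude that $\psi\in\Phi_{\mathrm{L}}\cup\Phi_{\mathrm{T}}$ (respectively $\Phi_{\mathrm{R}}\cup\Phi_{\mathrm{T}}$), which is all the statement requires. This also explains why the list deliberately omits the sum of an LMF and an RMF: in general neither endpoint is forced to vanish, so the product condition can fail.

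Finally, I will remark on smoothness. The sum of $\mathcal{C}^{q_1}$ and $\mathcal{C}^{q_2}$ functions is $\mathcal{C}^{\min(q_1,q_2)}$, which matches the order claimed.
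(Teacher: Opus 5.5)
Your proposal is correct and follows essentially the same route as the paper: by linearity of differentiation, the boundary shared by both summands vanishes up to order $\min(q_1,q_2)-1$, and any cancellation at the opposite boundary can only turn the sum into a TMF (possibly the zero function), so $\psi^{(i)}(0)\,\psi^{(i)}(T)=0$ holds in all four cases. The paper's proof also notes that if this cancellation continues past order $q_1$, a sum of two LMFs can be classified as an RMF at a higher order, so read your membership claim $\psi\in\Phi_{\mathrm{L}}\cup\Phi_{\mathrm{T}}$ as a statement at order $\min(q_1,q_2)$, which is all the proposition requires.
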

\begin{pf}
    Given any two sufficiently smooth LMFs $\prescript{}{1}{\varphi_{\mathrm{L}}}$ and $\prescript{}{2}{\varphi_{\mathrm{L}}}$, the derivatives of their sum is given as
    \begin{equation}
        \left( \prescript{}{1}{\varphi_{\mathrm{L}}}+\prescript{}{2}{\varphi_{\mathrm{L}}}\right)^{(i)} =\prescript{}{1}{\varphi_{\mathrm{L}}}^{(i)}+\prescript{}{2}{\varphi_{\mathrm{L}}}^{(i)}\text{,}
    \end{equation}
    where $i \in \mathbb{N}_{>0}$. By assuming $q_1\leq q_2$ without any loss of generality, the left boundary trivially evaluates to $0$ up to order $q_1-1$, but the right boundary is $0$ iff $\prescript{}{1}{\varphi_{\mathrm{L}}}^{(i)}(T)=-\prescript{}{2}{\varphi_{\mathrm{L}}}^{(i)}(T)$, and otherwise is non-zero. Thus, the right boundary can be $0$ or non-zero up to derivative $q_1-1$, resulting in a TMF or an LMF, respectively. If the right boundary is $0$ up to the $q\geq q_1$-th derivative, then it is an RMF since the left boundary condition of $\prescript{}{1}{\varphi_{\mathrm{L}}}$, i.e. $\prescript{}{1}{\varphi_{\mathrm{L}}}^{(q_1)}(0)$, must be non-zero, given that it is an LMF of order $q_1$. Using the same arguments, the sum of two RMFs can be an LMF, RMF, or TMF, and the sum of a TMF with an LMF or RMF is also a modulating function. \hfill $\blacksquare$
\end{pf}

\begin{rem}
    Note that, even though the sum of an LMF and an RMF \textbf{may} be a modulating function, the sum of an LMF with an RMF is not \textbf{necessarily} a modulating function. As a simple example, consider an LMF $\prescript{}{1}{\varphi_{\mathrm{L}}}$ of arbitrary order $q_1\geq 1$ that satisfies $\prescript{}{1}{\varphi_{\mathrm{L}}}(T)=1$ and an RMF $\prescript{}{2}{\varphi_{\mathrm{R}}}$ of arbitrary order $q_2\geq 1$ satisfying $\prescript{}{2}{\varphi_{\mathrm{R}}}(0)=1$. Then, regardless of the boundary conditions for higher derivatives, their sum cannot satisfy the $0$-th order derivative boundary condition to be an MF.
\end{rem}

The properties discussed in Proposition \ref{proposition: sum of LMFs and RMFs} and its proof provide a very interesting tuning procedure of modulating functions: given any left or right modulating functions $\prescript{}{1}{\varphi_{\mathrm{L}}} \in \Phi_{\mathrm{L}}$ and $\prescript{}{1}{\varphi_{\mathrm{R}}} \in \Phi_{\mathrm{R}}$ of order $q_1$, they are invariant under addition with any TMF $\varphi_{\mathrm{T}} \in \Phi_{\mathrm{T}}$ of order $q\geq q_1$, i.e. $\prescript{}{1}{\varphi_{\mathrm{L}}}+\varphi_{\mathrm{T}}=\prescript{}{2}{\varphi_{\mathrm{L}}} \in \Phi_{\mathrm{L}}$, $\prescript{}{1}{\varphi_{\mathrm{R}}}+\varphi_{\mathrm{T}}=\prescript{}{2}{\varphi_{\mathrm{R}}} \in \Phi_{\mathrm{R}}$, and $q_2=q_1$. Thus, it is possible to manipulate LMFs and RMFs without changing their order by adding TMFs of at least the same order.

Similarly, by multiplying MFs together, it is possible to increase the order of the resulting MF, as shown in the proof of Theorem \ref{theo: closedness MFs}. In this sense, it is possible to increase and decrease the order of the modulating function as desired by simple addition and multiplication. These two properties are summarized in the Corollary below.

\begin{cor}\label{corollary: increase and decrease order of MFs}
    Let $\prescript{}{1}{q}_{\mathrm{L}}, \prescript{}{1}{q}_{\mathrm{R}} \in \mathbb{N}_{>0}$ be respectively the order of the sufficiently smooth modulating function $\prescript{}{1}{\varphi}(\tau)$ on the left and right boundaries; and $\prescript{}{2}{q}_{\mathrm{L}}, \prescript{}{2}{q}_{\mathrm{R}} \in \mathbb{N}_{>0}$ be the order of the sufficiently smooth modulating function $\prescript{}{2}{\varphi}(\tau)$ on the left and right boundaries, respectively. Then, it follows that
    \begin{itemize}
        \item Combining the two modulating functions using a product, i.e. $\prescript{}{3}{\varphi}(\tau)=\prescript{}{1}{\varphi}(\tau)\prescript{}{2}{\varphi}(\tau)$, increases the order of the left and right boundary conditions to $\prescript{}{3}{q}_{\mathrm{L}}=\prescript{}{1}{q}_{\mathrm{L}}+\prescript{}{2}{q}_{\mathrm{L}}$ and $\prescript{}{3}{q}_{\mathrm{R}}=\prescript{}{1}{q}_{\mathrm{R}}+\prescript{}{2}{q}_{\mathrm{L}}$;
        \item Combining the two modulating functions using addition, i.e. $\prescript{}{3}{\varphi}(\tau)=\prescript{}{1}{\varphi}(\tau)+\prescript{}{2}{\varphi}(\tau)$, changes the order to $\prescript{}{3}{q}_{\mathrm{L}} \geq \min(\prescript{}{1}{q}_{\mathrm{L}}, \prescript{}{2}{q}_{\mathrm{L}})$ and $\prescript{}{3}{q}_{\mathrm{R}} \geq \min(\prescript{}{1}{q}_{\mathrm{R}}, \prescript{}{2}{q}_{\mathrm{R}})$.
    \end{itemize} 
\end{cor}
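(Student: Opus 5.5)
The plan is to treat the left and right boundaries separately and use only the Leibniz rule and linearity of differentiation, exactly as in the proof of Theorem~\ref{theo: closedness MFs} and Proposition~\ref{proposition: sum of LMFs and RMFs}. The order of a modulating function at a boundary $b\in\{0,T\}$ is read as the number $q_b$ such that $\varphi^{(i)}(b)=0$ for $i\in\mathbb{N}[0,q_b-1]$. Both functions are assumed sufficiently smooth, say in $\mathcal{C}^{q}([0,T])$ with $q$ at least the sum of the relevant orders.

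\emph{Product.} At the left boundary, write
\begin{equation*}
({}_{1}\varphi\,{}_{2}\varphi)^{(k)}(0)=\sum_{i=0}^{k}\binom{k}{i}\,{}_{1}\varphi^{(k-i)}(0)\,{}_{2}\varphi^{(i)}(0).
\end{equation*}
For $k\le {}_1q_{\mathrm{L}}+{}_2q_{\mathrm{L}}-1$, every term has either $k-i<{}_1q_{\mathrm{L}}$ or $i<{}_2q_{\mathrm{L}}$: if both failed, then $k\ge {}_1q_{\mathrm{L}}+{}_2q_{\mathrm{L}}$. Hence every term vanishes, and the left order of ${}_3\varphi$ is at least ${}_1q_{\mathrm{L}}+{}_2q_{\mathrm{L}}$.

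To claim equality, evaluate at $k={}_1q_{\mathrm{L}}+{}_2q_{\mathrm{L}}$. The only surviving term is $\binom{k}{{}_2q_{\mathrm{L}}}\,{}_1\varphi^{({}_1q_{\mathrm{L}})}(0)\,{}_2\varphi^{({}_2q_{\mathrm{L}})}(0)$, and it is nonzero whenever the orders are exact, meaning the first non-vanishing derivatives are nonzero. The identical argument at $\tau=T$ gives ${}_3q_{\mathrm{R}}={}_1q_{\mathrm{R}}+{}_2q_{\mathrm{R}}$. The statement's ${}_2q_{\mathrm{L}}$ in the right-boundary formula is evidently a typo for ${}_2q_{\mathrm{R}}$, and I would prove the symmetric version.

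A boundary where a function has no zero condition is covered by order $0$. This is consistent, since Definition~\ref{def: modulating function} only requires a vanishing at each derivative at one of the two ends.

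\emph{Sum.} For $i<\min({}_1q_{\mathrm{L}},{}_2q_{\mathrm{L}})$ we have $({}_1\varphi+{}_2\varphi)^{(i)}(0)={}_1\varphi^{(i)}(0)+{}_2\varphi^{(i)}(0)=0+0$, so ${}_3q_{\mathrm{L}}\ge\min({}_1q_{\mathrm{L}},{}_2q_{\mathrm{L}})$, and likewise at $T$. Only an inequality can be stated, because cancellation can raise the order; the proof of Proposition~\ref{proposition: sum of LMFs and RMFs} shows this with ${}_1\varphi^{(i)}(T)=-{}_2\varphi^{(i)}(T)$.

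The main obstacle is not computational. It is pinning down that ``order'' means exact order, so that the product's order is an equality rather than a lower bound. Without exactness, only ``$\ge$'' holds, and I would state this caveat. A further subtlety is whether ${}_3\varphi$ is a modulating function at all. For the product this follows from Theorem~\ref{theo: closedness MFs}. For the sum it requires, for each $i$, vanishing at one boundary. This is automatic up to $\min$ of the orders on each side, and for indices beyond that I would defer to Proposition~\ref{proposition: sum of LMFs and RMFs}, restricting attention to the boundary orders themselves.
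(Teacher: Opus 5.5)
Your proof is correct and takes the same route the paper relies on: the Leibniz-rule pairing argument from the proof of Theorem~\ref{theo: closedness MFs} for the product, and linearity with possible cancellation, as in the proof of Proposition~\ref{proposition: sum of LMFs and RMFs}, for the sum. You are also right on two further points: ${}_2q_{\mathrm{L}}$ in the right-boundary formula is a typo for ${}_2q_{\mathrm{R}}$ (the literal version fails, e.g., for $\tau(\tau-T)\cdot\tau^2(\tau-T)$), and equality for the product needs exact orders via the single surviving term $\binom{k}{{}_2q_{\mathrm{L}}}\,{}_1\varphi^{({}_1q_{\mathrm{L}})}(0)\,{}_2\varphi^{({}_2q_{\mathrm{L}})}(0)$, a step the paper leaves implicit.
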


These properties are applicable to any family of modulating functions and, therefore, allow for new modulating functions to be built using the operations of addition and multiplication in this set of rules. For example, the product of polynomial MFs always results in a polynomial MF, i.e.
\begin{equation*}
    \left( \tau^{q_1}(\tau-T)^{q_2} \right) \cdot \left( \tau^{q_3}(\tau-T)^{q_4}\right)=\tau^{q_1+q_3}(\tau-T)^{q_2+q_4}\text{,}
\end{equation*}
and, consequently, still a modulating function, but of a higher order. Similarly, the product of two hyperbolic MFs might not be reducible, as discussed in \cite{Acc_Hyperbolic_MF}, but it is still a modulating function nonetheless.

Additionally, since the $\sinh$ and $\cosh$ functions are at the core of the study of the exponential function \citep[pp. 73-75]{Hestenes_New_foundations_classical_mechanics}, by summing a sinh and cosh LMF of order $1$ with $F_1=F_2=F$ and $c_1=c_2=c$, one directly obtains an exponential LMF, i.e.
\begin{equation}
    F(\tau) \sinh(c \tau) + F(\tau)(\cosh(c \tau) - 1)=F(\tau)\left( e^{c \tau}-1\right) \text{.}
\end{equation}
By repeating the same procedure for RMFs and combining the two results with repeated multiplication, the family of exponential modulations is obtained through Theorem \ref{theo: closedness MFs}, being given as
\begin{equation}\label{eq: exponential family of MF}
    \varphi(\tau)=F(\tau)\left( e^{c_1 \tau}-1\right)^{q_1} \left( e^{c_2 (T- \tau)}-1\right)^{q_2}\text{,}
\end{equation}
with the same tuning parameters as \eqref{eq: sinh family of modulations}-\eqref{eq: sech family of modulations}. This result generalizes the left modulating exponential function \eqref{eq: Pin and Parisini exp RMF} introduced in \cite{Pin_exponential_MF}, i.e.
\begin{equation*}
    \varphi(\tau)=(1-e^{c\tau})^q
\end{equation*}
and can be compared to the complex Fourier TMFs \eqref{eq: complex-Fourier TMF}
\begin{equation*}
    \varphi(\tau)=\frac{1}{T}e^{- i m\omega_0\tau}\left(e^{-i\omega \tau} -1\right)^q\text{,}
\end{equation*}
showing a notable resemblance. Although $c_1$ and $F(\tau)$ have been assumed to be real-valued, the same boundary conditions are satisfied if they are complex. Therefore, both \eqref{eq: complex-Fourier TMF} and \eqref{eq: Pin and Parisini exp RMF} are particular cases of \eqref{eq: exponential family of MF}.

A pattern emerges from these modulating function families: the function minus its evaluation on the left boundary multiplied by the analogous version at the right boundary. Consequently, a simple procedure can be deduced to create a modulating function from any sufficiently smooth function $g(\tau)$ using Theorem \ref{theo: closedness MFs}, Theorem \ref{proposition: sum of LMFs and RMFs}, and Corollary \ref{corollary: increase and decrease order of MFs}.

\begin{thm}\label{theo: build MF}
    Given any sufficiently smooth scalar function $g: \mathbb{R}[0, T] \mapsto \mathbb{R}$, a family of modulating functions can be obtained as
    \begin{equation}\label{eq: algorithm for new modulation}
        \varphi(\tau)=F(\tau)\left( g(\tau)-g(0) \right)^{q_1}\left( g(\tau)-g(T) \right)^{q_2}\text{,}
    \end{equation}
    where the sufficiently smooth function $F(\tau) \in \mathbb{R}$, as well as the coefficients $T \in \mathbb{R}_{>0}$ and $q_1, q_2 \in \mathbb{N}$, are user defined parameters.
\end{thm}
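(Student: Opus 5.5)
The plan is to split \eqref{eq: algorithm for new modulation} into a left factor, a right factor and a weight, show that each factor is itself a modulating function, and then invoke Theorem \ref{theo: closedness MFs} and Corollary \ref{corollary: increase and decrease order of MFs} to conclude that the product is one as well. Concretely, I would define $h_{\mathrm{L}}(\tau):=g(\tau)-g(0)$ and $h_{\mathrm{R}}(\tau):=g(\tau)-g(T)$, both sufficiently smooth because $g$ is. By construction $h_{\mathrm{L}}(0)=0$ and $h_{\mathrm{R}}(T)=0$. Hence $h_{\mathrm{L}}$ is an LMF of order $1$ whenever $g(T)\neq g(0)$, and a TMF of order at least $1$ otherwise. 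Symmetrically, $h_{\mathrm{R}}$ is an RMF of order $1$ or a TMF.

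The first step is to raise the orders. I would apply the product part of Theorem \ref{theo: closedness MFs} (or the first item of Corollary \ref{corollary: increase and decrease order of MFs}) inductively $q_1-1$ times to $h_{\mathrm{L}}$. This gives that $h_{\mathrm{L}}^{q_1}$ is a left (or total) modulating function with left order at least $q_1$. For robustness I would also check this directly: by the general Leibniz rule, every term of $(h_{\mathrm{L}}^{q_1})^{(k)}$ with $k<q_1$ contains at least one undifferentiated factor $h_{\mathrm{L}}$, which vanishes at $\tau=0$. The same argument shows that $h_{\mathrm{R}}^{q_2}$ has right order at least $q_2$ at $\tau=T$. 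The degenerate cases $q_1=0$ or $q_2=0$ simply make the corresponding factor identically $1$, so the result is a single-sided MF, or it places no condition on that boundary. I would note that at least one of $q_1,q_2$ must be positive to obtain a genuine MF of order $q\in\mathbb{N}_{>0}$.

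The second step combines the factors. The product $h_{\mathrm{L}}^{q_1}h_{\mathrm{R}}^{q_2}$ is a product of an LMF (or TMF) with an RMF (or TMF). By the LMF$\times$RMF and TMF$\times$MF cases in the proof of Theorem \ref{theo: closedness MFs}, together with Corollary \ref{corollary: increase and decrease order of MFs}, this product is a TMF with left order at least $q_1$ and right order at least $q_2$, hence of order at least $\min(q_1,q_2)$. Multiplying by the sufficiently smooth weight $F$ cannot destroy zero boundary conditions, by the third item of Theorem \ref{theo: closedness MFs}. Therefore $\varphi$ in \eqref{eq: algorithm for new modulation} satisfies Definition \ref{def: modulating function}.

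The main obstacle is making the classification precise rather than just proving the vanishing conditions. The type (LMF, RMF or TMF) and the exact order depend on $g$: for instance, $g(T)=g(0)$, or derivatives of $g$ vanishing at the boundaries, can make the boundary orders strictly larger, and if $g$ is constant then $\varphi\equiv0$. I would handle this by stating only lower bounds on the orders. For the exact order, I would observe that if $g'(0)\neq0$ and $g(0)\neq g(T)$ (and similarly at $T$), then $(h_{\mathrm{L}}^{q_1})^{(q_1)}(0)=q_1!\,g'(0)^{q_1}\neq0$, so the order equals $q_1$. In all cases $\varphi$ remains a modulating function, which is all the statement claims.
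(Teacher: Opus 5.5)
Your proposal is correct and follows essentially the same route as the paper: take the first-order left factor $g(\tau)-g(0)$ and right factor $g(\tau)-g(T)$, raise them to the powers $q_1$ and $q_2$ by repeated multiplication, combine them via Theorem \ref{theo: closedness MFs}, and append the weight $F$. Your direct Leibniz-rule check of the vanishing orders and your handling of the degenerate cases ($g(0)=g(T)$, constant $g$, $q_1=q_2=0$) are refinements that the paper's proof glosses over, since it simply asserts that $g(\tau)-g(0)$ is an LMF; they do not change the argument.
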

\begin{pf}
    Start by assuming a sufficiently smooth function $g: \mathbb{R}[0, T] \mapsto \mathbb{R}$ and subtract the value of the function evaluation at the left boundary to obtain a left modulating function of order 1 given by
    \begin{equation}
        \varphi_{\mathrm{L}}(\tau)=g(\tau)-g(0)\text{.}
    \end{equation}
    Then, multiply $\varphi_{\mathrm{L}}$ with itself for a total of $q_1$ times to increase the order of the modulating function per Corollary \ref{corollary: increase and decrease order of MFs}. Next, note that a first-order RMF can be generated as
    \begin{equation}
        \varphi_{\mathrm{R}}(\tau)=g(\tau)-g(T)\text{,}
    \end{equation}
    and its order can be increased to $q_2$ through repeated multiplication. Since the product of modulating functions is also a modulating function per Theorem \ref{theo: closedness MFs}, a new family of modulating functions is obtained as \eqref{eq: algorithm for new modulation}, having also included the sufficiently smooth weight function $F(\tau) \in \mathbb{R}$ in accordance with Theorem \ref{theo: closedness MFs}. \hfill $\blacksquare$
\end{pf}

\begin{rem}
    Although one could obtain the right modulating function as $\varphi_{\mathrm{R}}=g(T-\tau)-g(0)$, the alternative $\varphi_{\mathrm{R}}=g(\tau)-g(T)$ is seldom mentioned, possibly in an attempt to maintain symmetry. However, symmetry is not necessary in any of the results proposed within this document and, therefore, is not a necessary restriction.
\end{rem}

Thus, it is not at all a coincidence that, e.g., hyperbolic modulating functions can be used to generate new families of modulating functions, much less that the structure obtained is similar to the sine, polynomial, and complex Fourier MFs given in, respectively, \eqref{eq: sine TMF family}, \eqref{eq: polynomial MF family}, and \eqref{eq: complex-Fourier TMF}. By following a simple procedure, \textit{any} sufficiently smooth function can be used to generate \textit{any} type (left, right, or total) of MF of \textit{any} order.

More importantly, Theorems \ref{theo: closedness MFs} and \ref{theo: build MF}, together with Proposition \ref{proposition: sum of LMFs and RMFs}, allow one to construct more modulating functions as needed, along with providing a simple way to tune modulating functions by using addition and multiplication, i.e. Corollary \ref{corollary: increase and decrease order of MFs}. In doing so, not only is it possible to ensure that the modulating functions are linearly independent, as later illustrated, but it is also not necessary to utilize anything beyond basic arithmetic operations to infinitely extend the repertoire of modulating functions. Additionally, this formulation does not require the modulating functions to be computed online, and, consequently, does not demand as much processing power as auxiliary system-based approaches.
\section{Constructing New Modulation Families}
\label{sec: constructing new MFs}

As an illustration of how to use Theorem \ref{theo: build MF} to obtain new families of modulating functions, 1 analytic and 3 non-analytic modulating functions are constructed hereafter. Lastly, the vector space properties of the set $\Phi_\mathrm{T}$ are exploited together with Corollary \ref{corollary: increase and decrease order of MFs} to ensure linear independence of TMFs by construction.

\subsection{Analytic Function Example}

\begin{figure*}[ht]
    \centering
    \includegraphics[width=0.32\linewidth]{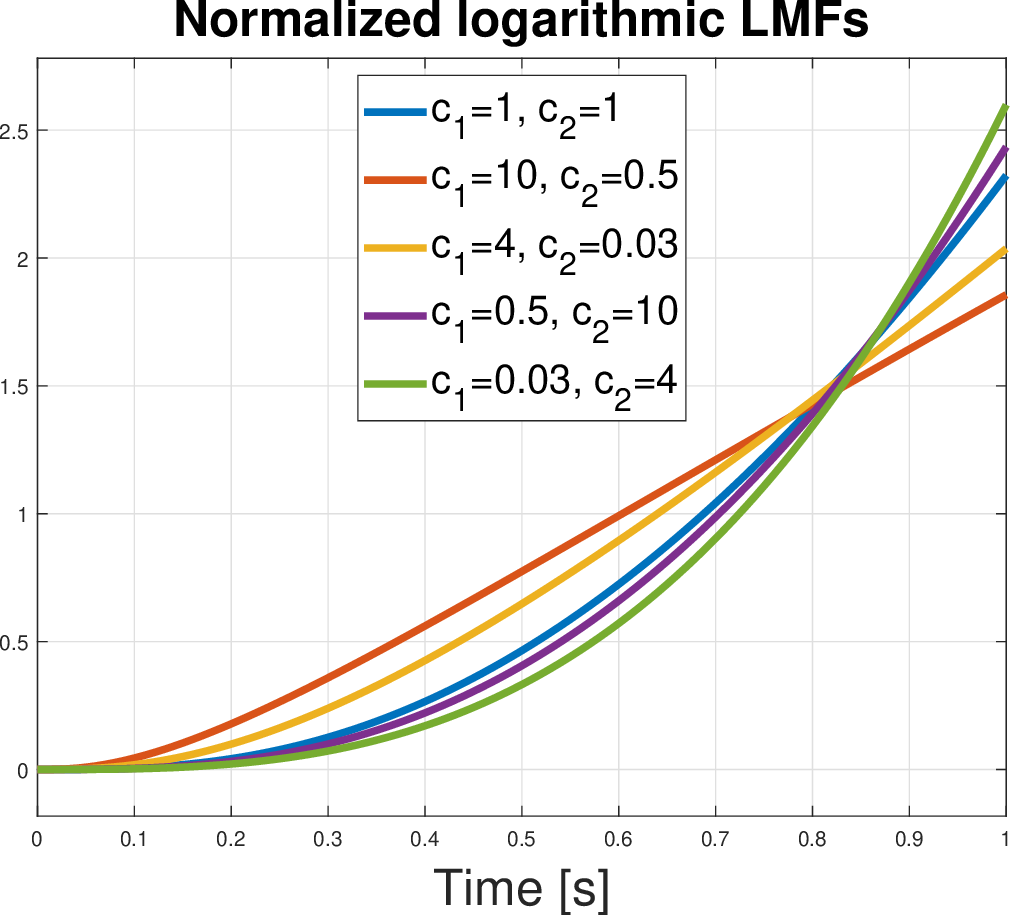}
    \includegraphics[width=0.32\linewidth]{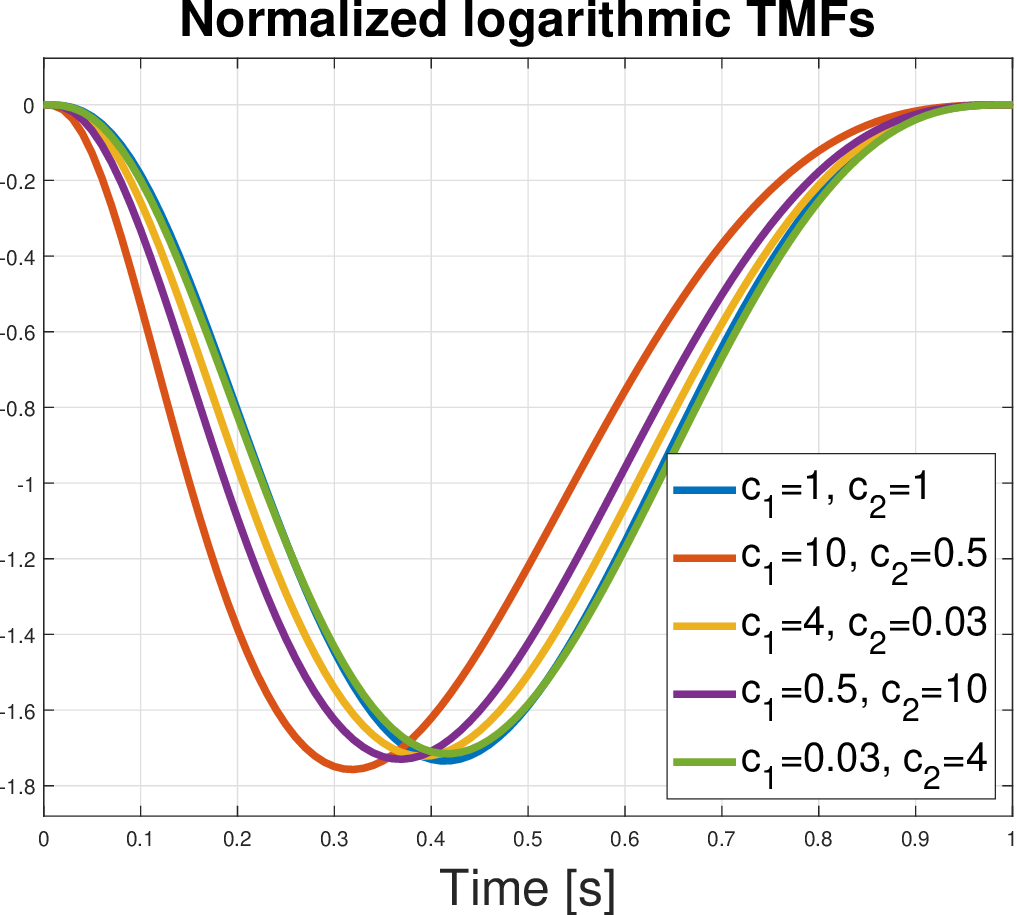}
    \includegraphics[width=0.32\linewidth]{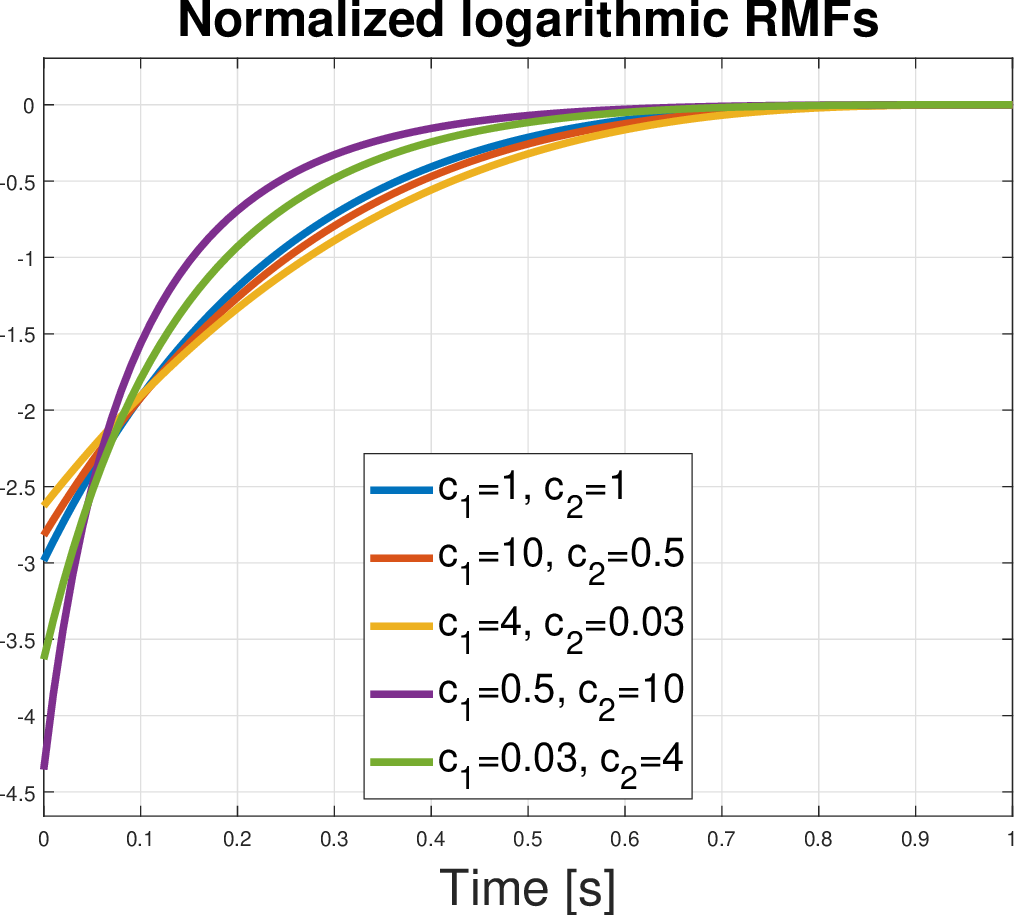}
    \caption{Examples of the logarithmic modulating functions generated from \eqref{eq: ln MF family}. The weight function is $F(\tau)=1$ in all cases, having $q_1=3$ and $q_2=0$ for LMFs (left); $q_1=q_2=3$ for TMFs (center); and $q_1=0$ and $q_2=3$ for RMFs (right). All functions use $T=1s$.}
    \label{fig: ln MFs example}
\end{figure*}

Since hyperbolic functions and exponential functions have already been used to generate modulating functions, consider now the natural logarithm function. By restricting the domain to $[1, T+1]$ to satisfy the smoothness requirements\footnote{This could be any compact interval starting at some non-zero $a \in \mathbb{R}_{>0}$ and ending at $T+a$.} and considering a possible scaling factor $c \in \mathbb{R}_{>0}$, the function $\ln (c \tau+1)$ is readily obtained for $\tau \in \mathbb{R}[0, T]$. This function is infinitely differentiable and, despite evaluating to $0$ on the left boundary, since $\ln(0+1)=0$, all derivatives on the left boundary are non-zero. Consequently, a left modulating function of any order $q$ can be obtained by applying Theorem \ref{theo: build MF}, giving rise to the logarithmic LMF family
\begin{equation}\label{eq: ln LMF family}
    \varphi_{\mathrm{L}}(\tau)=  (\ln (c \tau+1))^q\text{.}
\end{equation}
Similarly, by evaluating the $\ln$ function at the right boundary, the function is non-zero for all derivatives for all $T$. Following the same structure, a family of logarithmic RMFs can be constructed using Theorem \ref{theo: build MF} as 
\begin{equation}\label{eq: ln RMF family}
    \varphi_{\mathrm{R}}(\tau)= (\ln (c \tau+1) - \ln(c T+1))^q\text{,}
\end{equation}
which can then be combined with \eqref{eq: ln LMF family} to generate the logarithmic family of MFs
\begin{equation}\label{eq: ln MF family}
    \varphi(\tau)= F(\tau) (\ln (c_1 \tau+1))^{q_1} (\ln (c_2 \tau+1) - \ln(c_2 T+1))^{q_2}\text{,}
\end{equation}
having now included the weight function $F(\tau)$ and added indices to the parameters $c$ and $q$. Some examples of LMFs, TMFs, and RMFs obtained from \eqref{eq: ln MF family} are shown in Figure \ref{fig: ln MFs example}.

Of course, this process can be used for any arbitrary function that satisfies the smoothness condition, as discussed in Theorem \ref{theo: build MF}. However, no power series expansion has been used in any of the proofs, and, consequently, it is possible to construct modulating functions that are non-analytic.

\subsection{Non-analytic Function Example}

In light of this notion, the space of test functions might be a good candidate for new modulating functions, as they provide infinitely smooth functions with compact support that are not identically zero \citep[p.~10]{Grubb_Distributions}. Consider first the two-sided bump described by
\begin{equation}\label{eq: bump test function}
    h(\tau)=\left\{ \begin{array}{l}
     e^{\frac{-1}{1-\tau^2}}, \text{for } |\tau|<1 \\
     0, \text{for } |\tau| \geq 1
\end{array} \right.\text{,}
\end{equation}
a ubiquitous example of test functions. This function can be easily converted into a family of TMFs of infinite order by simply shifting it one unit to the right, scaling $\tau$ to fit within the time window, and multiplying by a sufficiently smooth function $F(\tau)$ per Theorem \ref{theo: closedness MFs} to obtain
\begin{equation}\label{eq: bump TMF family}
    \varphi_{\mathrm{T}}(\tau)=F(\tau)~ h\left(\frac{2\tau}{T}-1\right)\text{.}
\end{equation}
In this case, the only tuning parameter available is $F(\tau)$, but since there are no restrictions on $F(\tau)$ besides smoothness, the tuning possibilities remain vast.

\begin{figure*}[ht]
    \centering
    \includegraphics[width=0.32\linewidth]{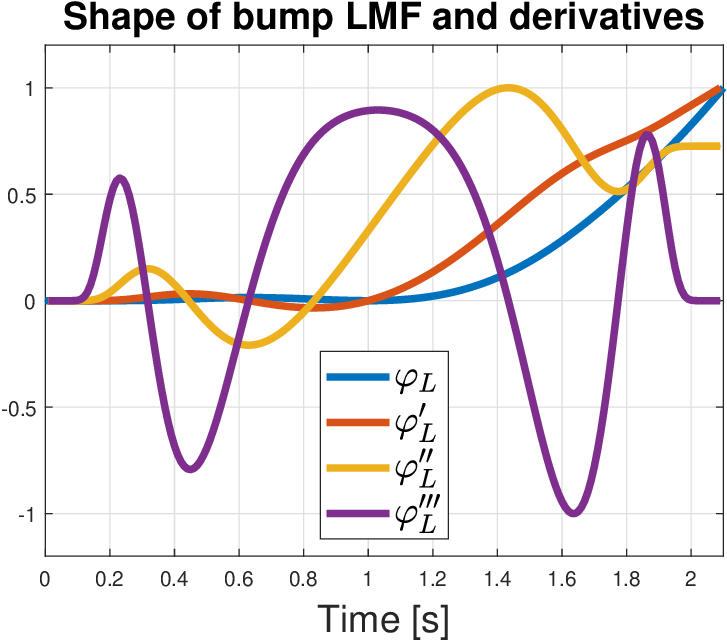}
    \includegraphics[width=0.32\linewidth]{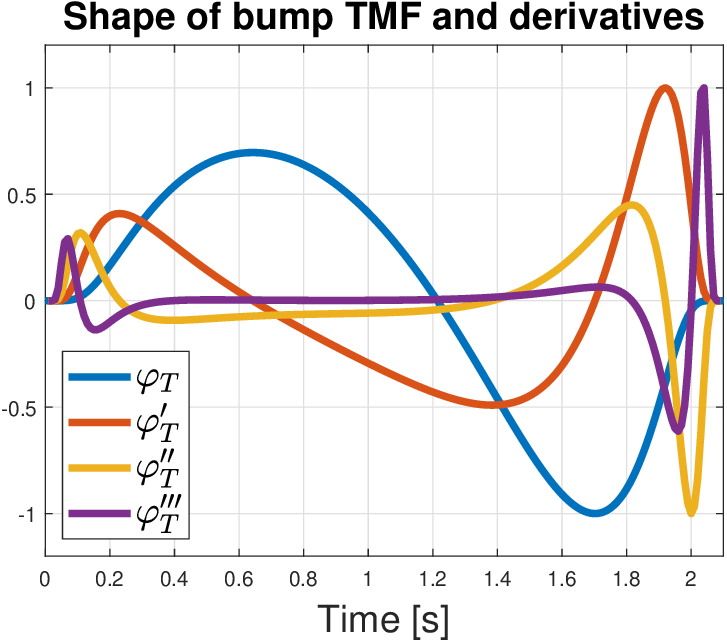}
    \includegraphics[width=0.32\linewidth]{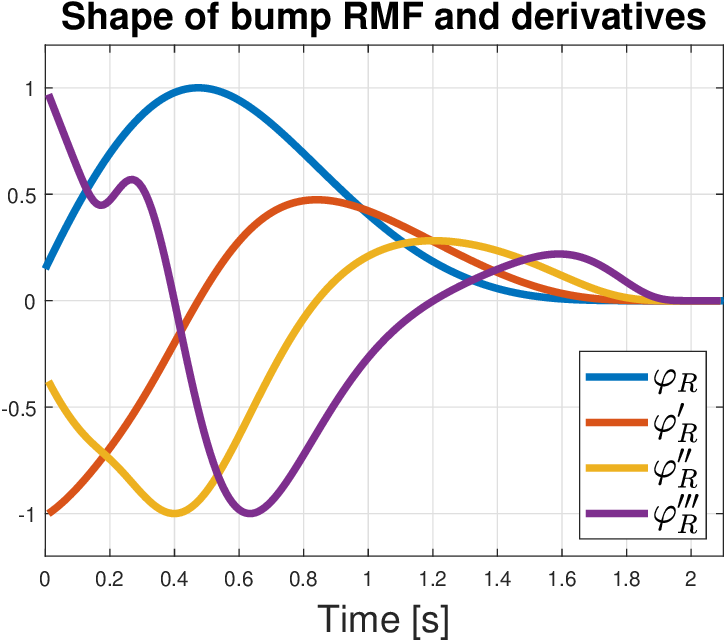}
    \caption{Examples of the non-analytic modulating functions \eqref{eq: bump LMF family}, \eqref{eq: bump TMF family}, and \eqref{eq: bump RMF family}. The weight functions are, respectively: $F(\tau)=(\tau/T)^2-2\tau/T+1$ (left); $F(\tau)=5-\sinh(4 \tau/T)$(center); and $F(\tau)=\sin(3 (1-\tau/T)^T )$ (right). All functions used the arbitrary time interval of $T=2.1s$ and were scaled to the $y$-axis interval $[-1, 1]$ for visual comparison.}
    \label{fig: bump MFs example}
\end{figure*}

Similarly, consider now the function 
\begin{equation}
    f(\tau)=\left\{ \begin{array}{l}
         e^{-\frac{1}{\tau}}, \text{for } \tau>0 \\
         0, \text{for } \tau \leq 0
    \end{array} \right.
\end{equation}
which, by construction, evaluates to zero on the left boundary, also for all derivatives. By combining this function in a particular way, it is possible to obtain a smooth transition between two points, similar to what was done in \cite{Reger_nonanalytic_functions}, given as
\begin{equation}
    g(\tau)=\frac{f(\tau)}{f(\tau)+f(1-\tau)}\text{,}
\end{equation}
which is a function that starts at $g(0)=0$ and smoothly arrives at $g(1)=1$. 

By then scaling $\tau$ to the interval $[0, T]$ and multiplying by a sufficiently smooth function $F(\tau)$ that has at least one non-zero derivative on the right boundary, the family of LMFs
\begin{equation}\label{eq: bump LMF family}
    \varphi_{\mathrm{L}}(\tau)= F(\tau) ~ g\left(\frac{\tau}{T}\right)
\end{equation}
is obtained using Theorem \ref{theo: closedness MFs}. Note that the restriction on the function $F(\tau)$, i.e. it must evaluate to a non-zero value at the right boundary for some derivative of finite order, is not present for TMFs given by \eqref{eq: bump TMF family}. 

Lastly, a family of RMFs can also be generated as
\begin{equation}\label{eq: bump RMF family}
    \varphi_{\mathrm{R}}(\tau)=F(\tau)~g\left(1-\frac{\tau}{T}\right)\text{,}
\end{equation}
where $F(\tau)$ must be non-zero at the left boundary for some finite derivative, analogous to the condition mentioned for \eqref{eq: bump LMF family}. Some examples of bump modulating functions, more precisely an LMF given as \eqref{eq: bump LMF family} (left), a TMF given as \eqref{eq: bump TMF family} (center), and an RMF given as \eqref{eq: bump RMF family} (right) and their derivatives, are seen in Figure \ref{fig: bump MFs example}.

\begin{rem}
    Care must be taken when using test functions as modulating functions, since they do not form an algebra, and addition is not necessarily well-defined.
\end{rem}

\subsection{Orthogonal Total Modulating Functions} \label{subsec: orthogonal TMFs}
One important assumption in Theorem \ref{theo: parameter estimation} is that the modulating functions are linearly independent. However, since the set of total modulating functions forms a vector space, this can be guaranteed by construction. To do so, we let $\Phi_\mathrm{T}$ be a subspace of the Hilbert space $\mathcal{L}^2[0, T]$, together with its induced norm. For clarity of exposition, the inner product is denoted $\langle \prescript{}{1}{\varphi_T}, \prescript{}{2}{\varphi_T} \rangle$.

Starting with a TMF $\prescript{}{1}{\psi_\mathrm{T}}(\tau)$ of order $q_1\geq q$, use Theorem \ref{theo: closedness MFs} to select the first TMF for the estimation process as
\begin{equation}
    \prescript{}{1}{\varphi_\mathrm{T}}(\tau)=\frac{\prescript{}{1}{\psi_\mathrm{T}}(\tau)}{||\prescript{}{1}{\psi_\mathrm{T}}||}\text{.}
\end{equation}
Next, take a second TMF $\prescript{}{2}{\psi_\mathrm{T}}(\tau)$ of order $q_2\geq q$ and, again, use Theorem \ref{theo: closedness MFs} and Corollary \ref{corollary: increase and decrease order of MFs} to construct the TMF $\prescript{}{2}{\varphi_\mathrm{T}}(\tau)$ through the two step procedure: first, note that $\prescript{}{2}{\bar{\varphi}_\mathrm{T}}(\tau)$ is given by
\begin{equation}
    \prescript{}{2}{\bar{\varphi}_\mathrm{T}}(\tau)=\prescript{}{2}{\psi_\mathrm{T}}(\tau)-\langle \prescript{}{1}{\varphi_\mathrm{T}}, \prescript{}{2}{\psi_\mathrm{T}} \rangle \prescript{}{1}{\varphi_\mathrm{T}}(\tau)
\end{equation}
and is also a TMF. Then, define $\prescript{}{2}{\varphi_\mathrm{T}}(\tau)$ as
\begin{equation}
    \prescript{}{2}{\varphi_\mathrm{T}}(\tau)=\frac{\prescript{}{2}{\bar{\varphi}_\mathrm{T}}(\tau)}{||\prescript{}{2}{\bar{\varphi}_\mathrm{T}}||}\text{.}
\end{equation}
Simply continue the process to obtain $\prescript{}{3}{\varphi_\mathrm{T}}(\tau)$ from the TMF $\prescript{}{3}{\psi_\mathrm{T}}(\tau)$ of order $q_3\geq q$ as
\begin{equation}
    \prescript{}{3}{\bar{\varphi}_T}=\prescript{}{3}{\psi_T}-\langle\prescript{}{1}{\varphi_\mathrm{T}}, \prescript{}{3}{\psi_\mathrm{T}} \rangle \prescript{}{1}{\varphi_\mathrm{T}} - \langle\prescript{}{2}{\varphi_\mathrm{T}}, \prescript{}{3}{\psi_\mathrm{T}} \rangle \prescript{}{2}{\varphi_\mathrm{T}}
\end{equation}
along with
\begin{equation}
    \prescript{}{3}{\varphi_\mathrm{T}}(\tau)=\frac{\prescript{}{3}{\bar{\varphi}_\mathrm{T}}(\tau)}{||\prescript{}{3}{\bar{\varphi}_\mathrm{T}}||}\text{,}
\end{equation}
repeating the process for as many TMFs as desired.

\begin{figure*}[ht]
    \centering
    \includegraphics[width=0.32\linewidth]{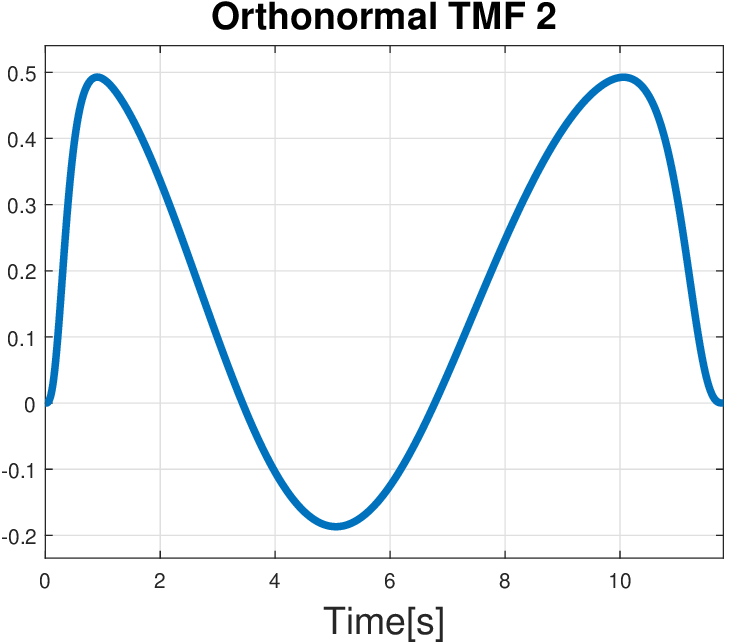}
    \includegraphics[width=0.32\linewidth]{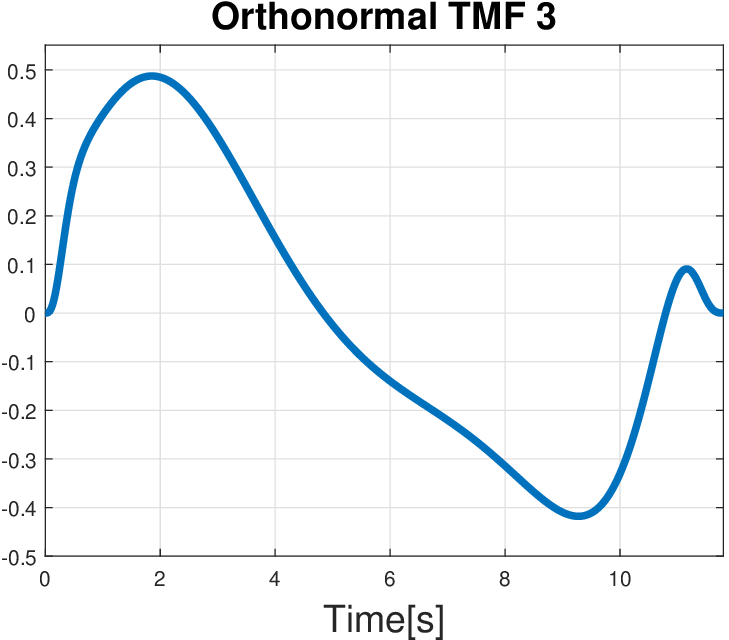}
    \includegraphics[width=0.32\linewidth]{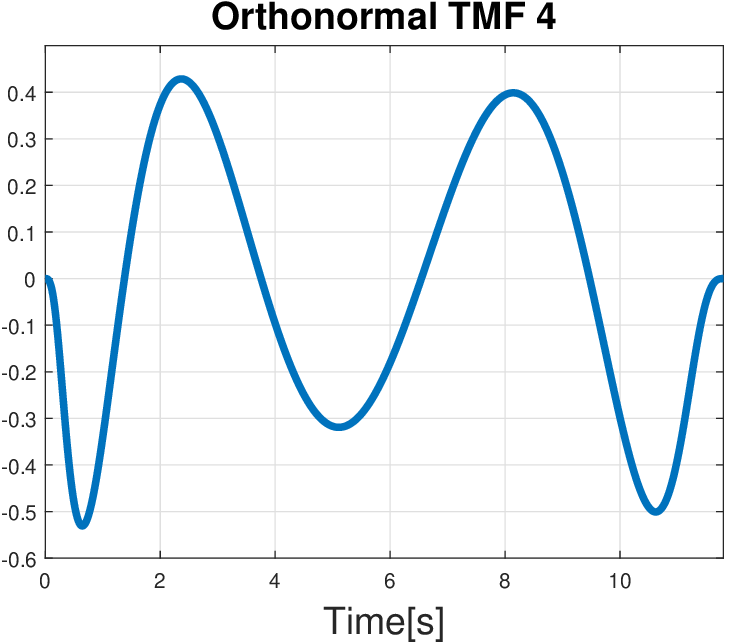}
    \caption{Orthogonal modulating functions obtained from the Gram-Schmidt process performed in the inner product space $\Phi_{\mathrm{T}} \subset \mathcal{L}^2[0, T]$ for the candidates $\prescript{}{1}{\psi_\mathrm{T}}=t^3 (T-t)^4$, together with $\prescript{}{2}{\psi_\mathrm{T}}$, $\prescript{}{3}{\psi_\mathrm{T}}$, and $\prescript{}{4}{\psi_\mathrm{T}}$ given in \eqref{eq: tmf psi2}-\eqref{eq: tmf psi4}.}
    \label{fig: orthogonal TMFs}
\end{figure*}

By following this process, all obtained modulating functions $\prescript{}{i}{\varphi_\mathrm{T}}(\tau)$ are orthogonal to each other and have unit norm. Note also that, for total modulating functions $\prescript{}{i}{\psi_\mathrm{T}}(\tau)$ of order $q_i\geq q$, all modulating functions generated through this process have, necessarily, order larger than or equal to $q$, according to Corollary \ref{corollary: increase and decrease order of MFs}.

As an example, consider the first modulating function $\prescript{}{1}{\psi_\mathrm{T}}(\tau)=t^3 (T-t)^4$, which is then normalized to obtain $\prescript{}{1}{\varphi_\mathrm{T}}$, along with the arbitrarily selected
\begin{gather}
    \prescript{}{2}{\psi_\mathrm{T}}=\tanh^3(3\tau) \tanh^3(1.5(T-\tau)) \label{eq: tmf psi2}\\
    \prescript{}{3}{\psi_\mathrm{T}}=\left( 5-\sinh(4 \frac{\tau}{T})\right)~h\left( \frac{2\tau}{T}-1\right) \\
    \prescript{}{4}{\psi_\mathrm{T}}=\tanh(3 \tau)~h\left( \frac{2\tau}{T}-1\right) \text{,} \label{eq: tmf psi4}
\end{gather}
having $h(\tau)$ defined according to \eqref{eq: bump test function}. By following the Gram-Schmidt process illustrated above to then obtain $\prescript{}{2}{\varphi_\mathrm{T}}$, $\prescript{}{3}{\varphi_\mathrm{T}}$, and $\prescript{}{4}{\varphi_\mathrm{T}}$, one obtains the 4 orthonormal TMFs shown in Figure \ref{fig: orthogonal TMFs}, guaranteeing linearly independence of the total modulating functions. Moreover, due to Corollary \ref{corollary: increase and decrease order of MFs}, all modulating functions are of order $q_i \geq 3$.
\section{Application Example}\label{sec: application example}

As a practical application of this method, consider now the parameter estimation problem of a boat's roll dynamics at low angular speeds, described by \citep{Ramezani_Roll_Stabilization_Canting_Keel}
\begin{equation}\label{eq: boat dynamics}
    \ddot{\phi}+a_1 \dot{\phi}+a_0 \phi  +a_{nl} \phi^3= b_0 \mathrm{u} \text{,} 
\end{equation}
where the input torque $u(t) \in \mathbb{R}$ and the boat roll angle $\phi(t) \in \mathbb{R}$ are assumed to be measured. Note that, despite this system being nonlinear, the nonlinearity only depends on known variables and, therefore, can still be handled using Theorem \ref{theo: parameter estimation} by considering it as another input. Such an extension is also discussed for derivative estimation in, e.g., \cite{Acc_Hyperbolic_MF}.

The coefficients $a_0$, $a_1$, $a_{nl}$, $b_0$ are related to the density of the fluid $\rho$, the acceleration of gravity $g$, the volume displacement of the ship $\nabla$, the mean metacentric height $\overline{GM}$, the linear drag coefficient $K_{\dot{\phi}}$, the nonlinear restoring coefficient $K_{\phi^3}$, the roll inertia $I_x$ and the added mass coefficient $\Delta I_x$ through the relations
\begin{equation}
    \begin{array}{lr}
         a_0=\frac{\rho g \nabla \overline{GM}}{I_x+\Delta I_x},\; \; \; \; \; \; & a_1=\frac{K_{\dot{\phi}}}{I_x + \Delta I_x}, \\
         a_{nl}=\frac{K_{\phi^3}}{I_x + \Delta I_x}, \; \; \; \; \; \; & b_0=\frac{1}{I_x+\Delta I_x}\text{.}
    \end{array}
\end{equation}
Note also that, since $b_0^{-1}$ gives the total roll inertia, coefficients such as $K_{\dot{\phi}}$ and $K_{\phi^3}$ can be individually obtained using the estimates $\hat{b}_0$, $\hat{a}_1$, and $\hat{a}_{nl}$. Similarly, $\overline{GM}$ can also be estimated using $\hat{b}_0$ and $\hat{a}_0$ since $\rho$ and $g$ are known constants. Thus, it is possible to, for example, estimate the impact of moving cargo or personnel on the boat's moment of inertia.

\begin{figure*}[ht]
    \centering
    \includegraphics[width=0.32\linewidth]{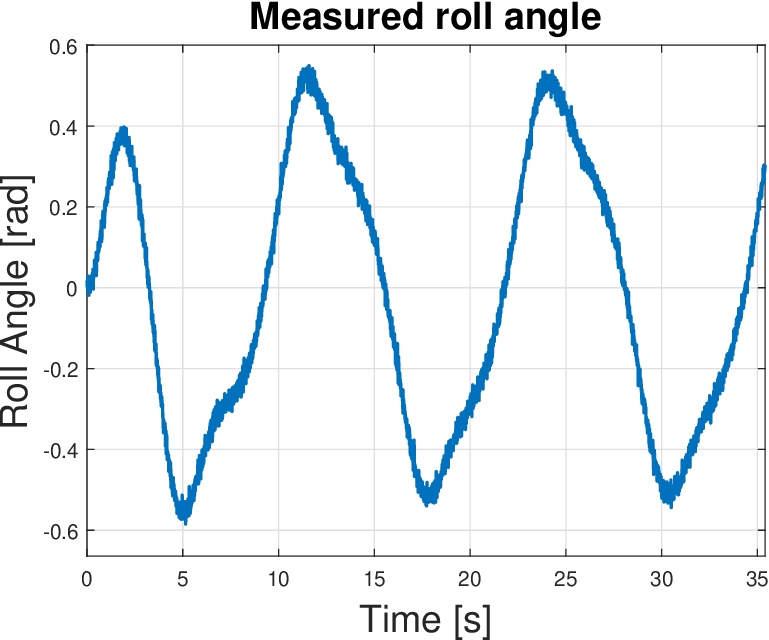}
    \includegraphics[width=0.32\linewidth]{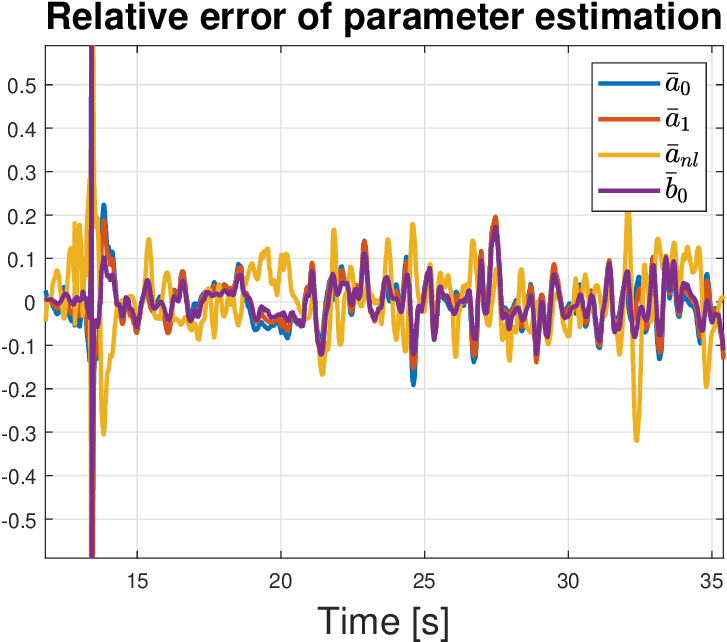}
    \includegraphics[width=0.32\linewidth]{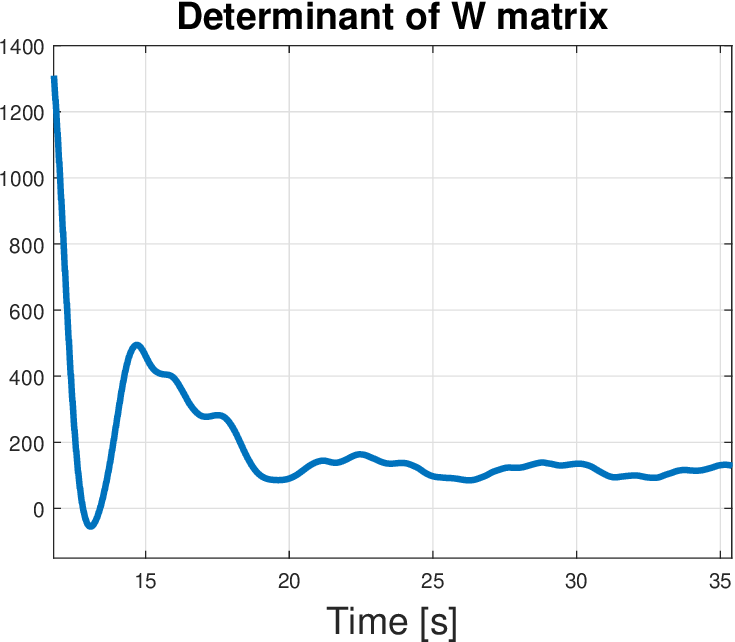}
    \caption{Results of the parameter estimation for system \eqref{eq: boat dynamics} with 4 orthonormal TMFs: the measured output (left); relative error for the parameter estimates (center); determinant of the $\mathbf{W}$ matrix of \eqref{eq: parameter estimation}. The simulation used the 4 orthonormal modulating functions discussed in Section \ref{subsec: orthogonal TMFs}, along with $T=11.8s$. The estimated parameters are roughly around a $10\%$-$15\%$ margin of the true values.}
    \label{fig: parameter estimation results}
\end{figure*}

The true model parameters were then selected as $a_0=1.33$, $a_1=0.64$, $a_{nl}=2.43$, and $b_0=6.4 \cdot 10^{-6}$, being obtained from the canting keel parameters given in \cite{Ramezani_Roll_Stabilization_Canting_Keel}, while the input is given as $\mathrm{u}(t)= 115625\cos(0.5 t)$ and zero-mean Gaussian noise with standard deviation $\sigma=0.015$ is added to the output. Such noise levels were selected based on existing literature for attitude estimation, such as \cite{Mahony_Nonlinear_Complementary_Filter}. Lastly, the initial conditions were selected as $\phi(0)=\dot{\phi}(0)=0$.

To estimate the 4 parameters, the modulating functions utilized were the four orthonormal TMFs illustrated in Section \ref{subsec: orthogonal TMFs}, namely $\prescript{}{1}{\varphi_\mathrm{T}}, \prescript{}{2}{\varphi_\mathrm{T}}, \prescript{}{3}{\varphi_\mathrm{T}}$, and $\prescript{}{4}{\varphi_\mathrm{T}}$ obtained from $\prescript{}{1}{\psi_\mathrm{T}}=t^3(T-t)^4$ along with $\prescript{}{2}{\psi_\mathrm{T}}, \prescript{}{3}{\psi_\mathrm{T}}$, and $\prescript{}{4}{\psi_\mathrm{T}}$ given by \eqref{eq: tmf psi2}-\eqref{eq: tmf psi4}. Then, numerical simulations were run using Theorem \ref{theo: parameter estimation} and the proposed modulating functions: the measured system response, relative estimation error for the parameter estimation, and the determinant of $\mathbf{W}$ are shown in Figure \ref{fig: parameter estimation results}.

It is clear to see that the linear independence of the modulating functions was guaranteed: the determinant of $\mathbf{W}$ settles around a non-zero value, only crossing zero for a brief period between $12.6s$ and $13.4s$, which is the region where the error of the associated parameter estimates drastically increases. This indicates that, even though the modulating functions are orthogonal to each other, they are also orthogonal to the measured signal for this brief period of time. Nevertheless, the relative error for the estimates of all parameters stays within $10\%$ of the true values for the majority of the time, and rarely goes beyond $20\%$.

The sharp peak around $13s$ in Figure \ref{fig: parameter estimation results} illustrates the main known issue of this approach: in many previous contributions, such as  \cite{Co_Ungarala_Batch_Recursive_Parameter_Estimation_MFM}, \cite{IonesiRJ19}, and \cite{Li_Pin_Kernel_Based_Simulatenous_Parameter_State_Estimation}, the matrix $\mathbf{W}$ becomes non-invertible for some periods, leading to sharp peaks in the estimation error. Unlike in previous contributions, this issue did not occur more than once in Figure \ref{fig: parameter estimation results}, indicating that this approach may be better suited for practical applications.

Moreover, one can improve this result by using more modulating functions. Since the vector space generated by the set of TMFs is infinite-dimensional, the Gram-Schmidt process can be continued indefinitely, as long as sufficiently different TMF candidates $\prescript{}{i}{\psi}_{\mathrm{T}}$ are utilized. In this case, we start with the sine-type TMFs described by \eqref{eq: sine TMF family}, but now also considering an arbitrary weight function $F(\tau)$ according to Theorem \ref{theo: closedness MFs}, namely
\begin{equation*}
    \varphi_\mathrm{T}(\tau)=F(\tau)\sin^q\left(\frac{q\pi}{T} \tau\right)\text{.}
\end{equation*}
We then construct the TMF candidate $\prescript{}{5}{\psi}_{\mathrm{T}}(\tau)$ by adding together sine-type TMFs of orders $3$ and $4$ with the respective weight functions $e^{\tau}$ and $e^{-\tau}$. Furthermore, we also add the ln-type TMF \eqref{eq: ln MF family} with the arbitrarily selected coefficients $q_1=q_2=3$, $c_1=5$, $c_2=9.1$, and $F(\tau)=1$, as well as a sech-type TMF, i.e. \eqref{eq: sech family of modulations}, with the arbitrarily selected coefficients $q_1=q_2=3$, $c_1=4.5$, $c_2=-2.7$, and $F(\tau)=1$. Namely, $\prescript{}{5}{\psi}_{\mathrm{T}}$ is given as
\begin{equation}\label{eq: TMF candidate 5}
    \begin{aligned}
        \prescript{}{5}{\psi}_{\mathrm{T}}(\tau)=&e^{\tau}\sin^3\left(\frac{3\pi}{T} \tau\right)+e^{-\tau}\sin^4\left(\frac{4\pi}{T} \tau\right)\\
        &+(\ln (5 \tau+1))^{3} (\ln (9.1 \tau+1) - \ln(9.1 T+1))^{3}\\
        &+(\sech(4.5 \tau)-1)^{3} (\sech(-2.7(T-\tau))-1)^{3}\text{,}
    \end{aligned}
\end{equation}
which is of order $3$, at the very least, due to Corollary \ref{corollary: increase and decrease order of MFs}. Then, after applying the Gram-Schmidt process, $\prescript{}{5}{\psi}_{\mathrm{T}}(\tau)$ generates the orthonormal TMF $\prescript{}{5}{\varphi}_{\mathrm{T}}(\tau)$.

An important detail is that it is necessary to use the pseudoinverse of the matrix $\mathbf{W}$ instead of the inverse, as the matrix is not square. Thus, instead of evaluating $\det \mathbf{W}$, $\det \left(\mathbf{W}^\top\mathbf{W}\right)$ is evaluated instead, having the results of the estimation process shown in Figure \ref{fig: parameter estimation results 5 TMFs}.

By using 5 orthonormal TMFs, not only did the determinant of $\mathbf{W}^\top\mathbf{W}$ never cross $0$, but the estimation error has also been decreased. In particular, the norm of the RMS relative estimation error using the 4 orthonormal TMFs is $0.2956$, while the norm of the RMS relative estimation error with 5 orthonormal TMFs is $0.0998$: an improvement of over $66\%$. At the same time, the MFs are still all calculated only once and offline, without any need for online computation of the modulating functions. More importantly, the parameter estimates shown on the left part of Figure \ref{fig: parameter estimation results 5 TMFs} do not display the sharp peaks seen in Figure \ref{fig: parameter estimation results} or, e.g., \cite{Co_Ungarala_Batch_Recursive_Parameter_Estimation_MFM}; \cite{IonesiRJ19}; and \cite{Li_Pin_Kernel_Based_Simulatenous_Parameter_State_Estimation}.

\begin{figure*}[ht]
    \centering
    \includegraphics[width=0.32\linewidth]{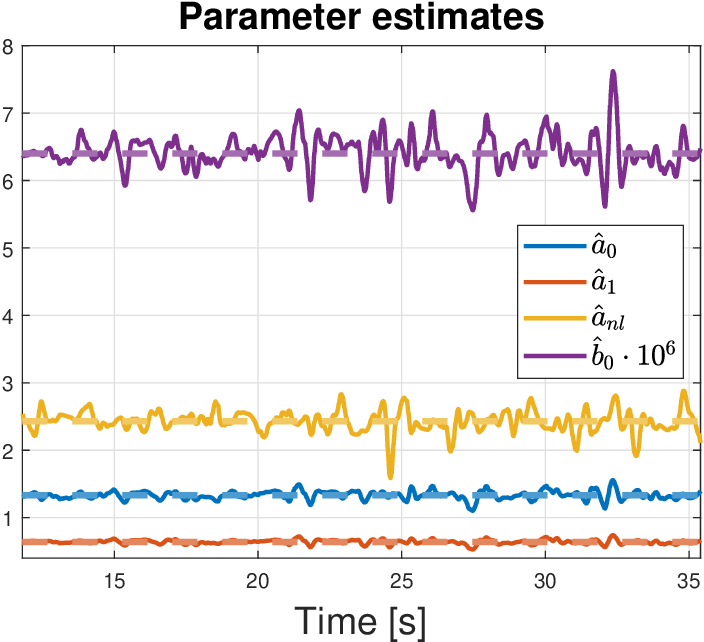}
    \includegraphics[width=0.32\linewidth]{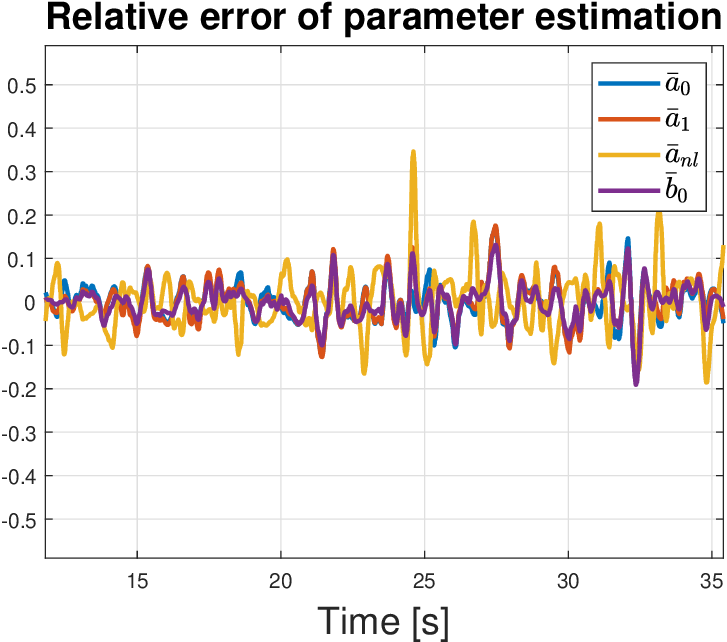}
    \includegraphics[width=0.32\linewidth]{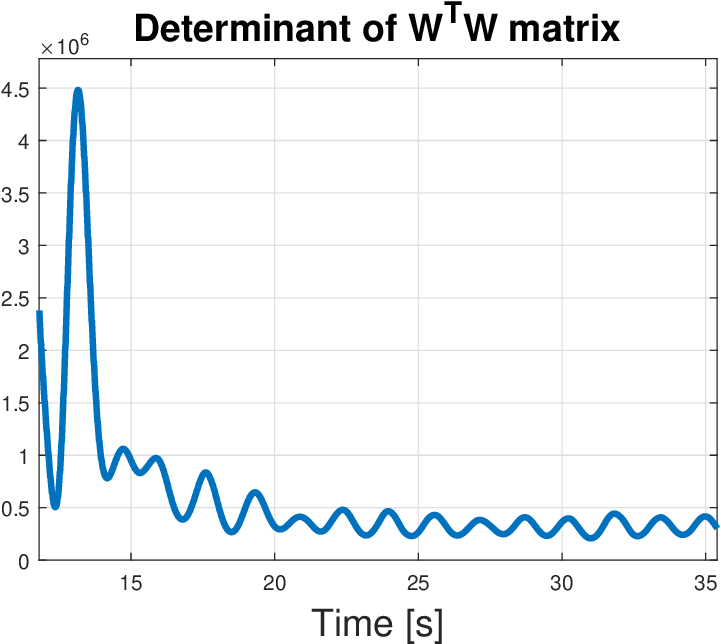}
    \caption{Results of the parameter estimation for system \eqref{eq: boat dynamics} using 5 orthonormal TMFs: parameter estimates obtained with the 5 orthonormal TMFs, with the correct value shown in dashed lines of similar colors (left); the respective relative error for the parameter estimates (center); and the determinant of the $\mathbf{W}^\top \mathbf{W}$ matrix. The simulation used the 4 orthonormal modulating functions discussed in Section \ref{subsec: orthogonal TMFs} together with the fifth MF obtained from \eqref{eq: TMF candidate 5} and the Gram-Schmidt process, with $T=11.8s$.}
    \label{fig: parameter estimation results 5 TMFs}
\end{figure*}
\section{Concluding Remarks}
\label{sec: conclusion}

In this paper, the algebraic properties of modulating functions were formalized and studied. In particular, it was shown that each of the sets of modulating functions forms an abelian semigroup under scalar multiplication, while total modulating functions also form an abelian group under addition, a vector space, and an associative and commutative algebra. From this analysis, an algorithm to obtain \textit{any} type of modulating function of \textit{any} order using \textit{any} sufficiently smooth function is obtained, along with a procedure to construct new modulating functions based on existing ones.

These algorithms were then illustrated by constructing a new family of analytic modulating functions and 3 families of non-analytic modulating functions. In addition, the vector space generated by the set of total modulating functions was used to construct orthonormal total modulating functions through a Gram-Schmidt process, guaranteeing the linear independence of the modulating functions. Then, the generated orthonormal modulating functions were used in the parameter estimation problem of a boat's roll dynamics.

The results proposed in this paper not only extend the repertoire of modulating functions but also show how, by exploiting the algebraic properties of modulating functions, it is possible to effectively avoid the known matrix inversion issue related to this approach. Moreover, there is no need to implement an auxiliary system, as the modulating functions are all calculated offline, effectively avoiding higher computational costs during implementation.

Further research on the topic could involve a hardware implementation of the method, along with a comparison to related approaches, such as the use of auxiliary systems.

\printbibliography

\end{document}